\documentclass[12pt,a4paper]{article}
\pdfoutput=1

\usepackage[utf8]{inputenc}
\usepackage[T1]{fontenc}
\usepackage{amsmath,amssymb,amsthm}
\usepackage{mathtools}
\usepackage{graphicx}
\usepackage[margin=1in]{geometry}
\usepackage{natbib}
\usepackage{booktabs}
\usepackage{caption}
\usepackage{subcaption}
\usepackage{hyperref}
\usepackage{cleveref}
\usepackage{float}
\usepackage{setspace}
\usepackage{xcolor}
\usepackage{enumitem}

\newtheorem{theorem}{Theorem}
\newtheorem{proposition}[theorem]{Proposition}
\newtheorem{corollary}[theorem]{Corollary}

\theoremstyle{definition}
\newtheorem{remark}{Remark}
\newtheorem{definition}{Definition}
\newtheorem{example}{Example}

\newcommand{\E}{\mathbb{E}}
\newcommand{\Q}{\mathbb{Q}}
\newcommand{\R}{\mathbb{R}}

\newcommand{\diff}{\mathrm{d}}
\newcommand{\ee}{\mathrm{e}}
\newcommand{\eTAO}{e_{\mathrm{TAO}}}

\hypersetup{
    colorlinks=true,
    linkcolor=blue!70!black,
    citecolor=blue!70!black,
    urlcolor=blue!70!black
}

\title{\textbf{Option Pricing on Automated Market Maker Tokens}}

\author{Philip Z.\ Maymin\thanks{Fairfield University Dolan School of Business. Email: \texttt{pmaymin@fairfield.edu}.}}

\date{February 2026}

\begin{document}

\maketitle

\begin{abstract}
We derive the stochastic price process for tokens whose sole price discovery mechanism is a constant-product automated market maker (AMM). When the net flow into the pool follows a diffusion, the token price follows a constant elasticity of variance (CEV) process, nesting Black--Scholes as the limiting case of infinite liquidity. We obtain closed-form European option prices and introduce liquidity-adjusted Greeks. The CEV structure generates a leverage effect---volatility rises as price falls---whose normalized implied volatility skew depends only on the pool's weighting parameter, not on pool depth: Black--Scholes underprices 20\%-out-of-the-money puts by roughly 6\% in implied volatility terms at every pool depth, while the absolute pricing discrepancy vanishes as pools deepen. Empirically, after controlling for pool depth and flow volatility, realized return variance across 90 Bittensor subnets exhibits a strongly negative price elasticity, decisively rejecting geometric Brownian motion and consistent with the CEV prediction. A complementary delta-hedged backtest across 82 subnets confirms near-identical hedging errors at the money, consistent with the prediction that pricing differences are concentrated in the wings.
\end{abstract}

\medskip
\noindent\textbf{Keywords:} automated market makers, option pricing, constant elasticity of variance, Black--Scholes, Bittensor, decentralized finance, liquidity

\medskip
\noindent\textbf{JEL Classification:} G12, G13, G14

\clearpage

\section{Introduction}\label{sec:intro}

Automated market makers (AMMs) have become the dominant mechanism for token exchange in decentralized finance (DeFi), processing hundreds of billions of dollars in cumulative trading volume since the launch of Uniswap in 2018 \citep{adams2021uniswap}. Unlike traditional order-book exchanges where prices emerge from the interaction of discrete buy and sell orders, AMMs determine prices algorithmically from the ratio of token reserves held in liquidity pools. The most widely adopted design, the \emph{constant-product} AMM, maintains the invariant $x \cdot y = k$ across two token reserves $x$ and $y$, with the marginal exchange rate given by the reserve ratio \citep{angeris2020improved}.

A new class of AMM-native tokens has emerged in which the AMM is not merely a trading venue but the \emph{sole price discovery mechanism}. There is no order book, no off-chain market, and no external price oracle; the bonding curve \emph{is} the market. The most prominent example is the Bittensor network's Dynamic TAO (dTAO) system, launched in February 2025, which assigns each of its subnets an ``alpha'' token traded exclusively through a dedicated constant-product AMM against the network's native TAO token \citep{bittensor2025dtao}. With over 60 active subnets, a total staked value exceeding \$3 billion, and a growing ecosystem of subnet operators seeking to hedge treasury exposure, pricing derivatives on these tokens is an increasingly practical concern. Bittensor's setting also provides the cleanest possible test environment for AMM price dynamics: because there is no competing venue, the observed price process is entirely determined by the AMM mechanics, making it possible to test the theory without confounding effects from external price discovery.

The Black--Scholes model \citep{black1973pricing} assumes the underlying follows geometric Brownian motion (GBM) with constant volatility, an assumption justified when price changes are driven by exogenous information arrival on a deep, frictionless market. For AMM-native tokens, the price is \emph{endogenously determined} by the bonding curve: every trade mechanically shifts the reserve ratio, and the resulting price dynamics inherit the nonlinear structure of the AMM itself.

The central contribution of this paper is to derive the stochastic process governing AMM token prices from first principles and to develop the resulting option pricing framework. Our main result (\Cref{thm:main}) establishes a simple identity: when the net staking flow into an AMM pool follows a Brownian diffusion, the token price follows a \emph{constant elasticity of variance} (CEV) model \citep{cox1975notes,cox1996constant} with exponent $\beta = w$, the weight of the numeraire token in the pool. For the standard constant-product AMM, $\beta = 1/2$. This is not an empirical estimate; it is a mathematical consequence of the bonding curve.

This result has several important implications:
\begin{enumerate}[leftmargin=*]
    \item \textbf{Black--Scholes as a limiting case.} As pool liquidity $k \to \infty$, the CEV volatility parameter vanishes and the price becomes deterministic. For large but finite $k$, the process approximates GBM, and Black--Scholes applies with an effective volatility that is inversely proportional to the square root of pool depth. The result gives a precise characterization of when standard pricing models are adequate.

    \item \textbf{Liquidity-dependent volatility and the leverage effect.} The instantaneous volatility of the AMM token price is $\sigma(P) = \delta P^{\beta - 1}$, where $\delta$ is proportional to the flow volatility and inversely proportional to pool depth. For $\beta = 1/2$, volatility decreases with the square root of price. This produces a structural leverage effect (negative correlation between price and volatility) that is a consequence of the bonding curve mechanics, not capital structure.

    \item \textbf{AMM-specific Greeks.} Beyond the standard sensitivities, we derive a ``liquidity Greek'' $\Lambda = \partial V / \partial k$ measuring option value sensitivity to pool depth, and an ``emission Greek'' $\mathcal{E} = \partial V / \partial e$ capturing sensitivity to the token emission rate.

    \item \textbf{Quantifiable pricing discrepancy.} We provide closed-form expressions for the pricing discrepancy relative to Black--Scholes as a function of pool depth and moneyness (\Cref{fig:price_vs_k,fig:vol_smile}), enabling practitioners to assess when the standard model is inadequate and by how much.

    \item \textbf{Universal implied volatility skew.} We show that the normalized implied volatility skew depends only on $\beta$, not on the volatility parameter $\delta$ or pool depth $k$ (\Cref{prop:skew}). For $\beta = 1/2$, Black--Scholes underprices 20\%-out-of-the-money puts by roughly 6\% in implied volatility terms at every pool depth. This is a structural, falsifiable prediction.

    \item \textbf{Empirical validation.} The CEV model predicts that, after controlling for pool depth and flow volatility, realized return variance should scale as $P^{2(\beta-1)} = P^{-1}$ for constant-product AMMs. A cross-sectional test across 90 Bittensor subnets yields a median variance elasticity of $-0.86$ (interquartile range $[-0.98, -0.71]$), strongly rejecting the GBM null of zero ($p < 0.0001$) and consistent with $\beta$ near $1/2$.
\end{enumerate}

Our work connects three strands of literature. The first is the financial theory of AMMs, including analyses of impermanent loss \citep{loesch2021impermanent}, AMM design \citep{angeris2020improved,angeris2022optimal}, and the relationship between AMM liquidity provision and option payoffs \citep{clark2020replicating,guillaume2024unified}. The second is the CEV option pricing literature initiated by \citet{cox1975notes}, with closed-form solutions developed by \citet{schroder1989computing} and subsequent extensions \citep{davydov2003pricing,larguinho2013computation}. The third is the nascent literature on Bittensor's tokenomics and decentralized AI markets \citep{bittensor2025dtao}.

The remainder of the paper is organized as follows. \Cref{sec:related} surveys the related literature. \Cref{sec:background} describes the institutional setting, with particular attention to Bittensor's dTAO mechanism. \Cref{sec:model} derives the CEV price process from AMM fundamentals. \Cref{sec:pricing} develops the option pricing framework, including closed-form solutions, Greeks, and emission extensions. \Cref{sec:numerical} provides numerical analysis, Monte Carlo validation, calibration to Bittensor data, and empirical tests of the CEV variance elasticity. \Cref{sec:discussion} discusses limitations and extensions. \Cref{sec:conclusion} concludes.

\section{Related Literature}\label{sec:related}

\paragraph{AMM theory.} The mathematical foundations of constant-function market makers were established by \citet{angeris2020improved}, who characterized the set of feasible trades and the connection between AMM prices and external market prices. \citet{angeris2022optimal} extended this to optimal routing across multiple pools. \citet{milionis2022automated} introduced the loss-versus-rebalancing (LVR) framework, decomposing LP returns into market risk and a predictable adverse selection cost, providing what they term a ``Black--Scholes formula for AMMs.'' \citet{park2023conceptual} identifies conceptual flaws in constant-product pricing, including persistent arbitrage and front-running vulnerabilities. The welfare properties and fee structures of AMMs are analyzed by \citet{roughgarden2024transaction}, while \citet{cartea2023decentralised} develop a continuous-time model of AMM dynamics incorporating arbitrageurs and informed traders.

\paragraph{AMMs and options.} \citet{clark2020replicating} showed that a constant-product AMM liquidity provider is effectively writing a perpetual straddle, and \citet{loesch2021impermanent} formalized impermanent loss as a variance-dependent cost. \citet{hasbrouck2024economic} formalize concentrated liquidity positions as covered calls, showing that LPs forgo option time value in exchange for fees. \citet{fukasawa2023weighted} prove that impermanent loss can be hedged with weighted variance swaps, connecting AMM positions to gamma swaps. \citet{guillaume2024unified} developed static hedging strategies using vanilla options, and \citet{theammbook2022} used the Black--Scholes formula to estimate divergence loss magnitudes. \citet{bichuch2024derivative} derive risk-neutral prices and Greeks for LP tokens themselves, treating the LP position as a derivative of the underlying asset prices under GBM. On-chain options protocols have motivated further pricing theory: \citet{dave2023perpetual} and \citet{blockscholes2025} analyze perpetual options in DeFi, while \citet{singh2025options} survey the broader DeFi options ecosystem.

\paragraph{CEV models.} The constant elasticity of variance model was introduced by \citet{cox1975notes} and extended by \citet{cox1996constant}. Empirical estimation and testing of the CEV model against equity option data were provided by \citet{beckers1980cev} and \citet{emanuel1982further}. Closed-form European option prices were derived by \citet{schroder1989computing} using the non-central chi-squared distribution. \citet{davydov2003pricing} provided eigenfunction-based pricing, and \citet{larguinho2013computation} improved the numerical stability of the CEV formula. The CEV model generates implied volatility skew controlled by the elasticity parameter, making it a parsimonious alternative to stochastic volatility models \citep{heston1993closed}. In the CEV literature, the elasticity parameter $\beta$ is typically an empirical quantity estimated from data; our contribution is to show that for AMM tokens, $\beta$ is pinned by the pool design ($\beta = w$) rather than estimated.

\paragraph{Concurrent work.} In independent and concurrent work, \citet{hitier2025dynamics} models LP portfolio value in constant-product AMMs under GBM. That paper assumes the external asset price follows GBM and derives LP value dynamics, whereas we derive the \emph{endogenous} price process of a token that trades \emph{only} through the AMM. The approaches are complementary: Hitier's applies when the AMM is one of many trading venues; ours applies when the AMM is the sole price discovery mechanism.

\paragraph{Our contribution.} The existing literature treats LP positions \emph{as} derivatives (of the underlying asset price) and prices them accordingly \citep{bichuch2024derivative,hasbrouck2024economic}. We work in the opposite direction: we derive the stochastic process governing the \emph{token price itself} from the bonding curve mechanics, and then price derivatives \emph{on} that token. The result, that AMM token prices follow a CEV process with exponent equal to the pool weight, provides this missing link and yields a complete option pricing framework.

\section{Institutional Background}\label{sec:background}

\subsection{Constant-Product Automated Market Makers}

A constant-product AMM maintains two token reserves $(x, y)$ subject to the invariant
\begin{equation}\label{eq:cpamm}
    x \cdot y = k,
\end{equation}
where $k > 0$ is constant during any individual swap. A trader who deposits $\Delta x$ units of token $X$ receives $\Delta y$ units of token $Y$, determined by
\begin{equation}\label{eq:swap}
    (x + \Delta x)(y - \Delta y) = k \quad \Longrightarrow \quad \Delta y = \frac{y \cdot \Delta x}{x + \Delta x}.
\end{equation}
The marginal price of token $Y$ in terms of token $X$ is
\begin{equation}\label{eq:marginal_price}
    P = \frac{x}{y},
\end{equation}
obtained by differentiating the invariant. Equation~\eqref{eq:swap} implies that the effective execution price for a trade of size $\Delta x$ exceeds the marginal price by a slippage term of order $\Delta x / x$, making large trades progressively more expensive. The design was introduced by \citet{buterin2017path} and formalized by \citet{angeris2020improved}.

\begin{example}[A simple constant-product AMM]\label{ex:simple_amm}
Suppose a pool holds $x = 1{,}000$ TAO and $y = 40{,}000$ ALPHA, so $k = x \cdot y = 4 \times 10^7$ and the marginal price is $P = 1{,}000/40{,}000 = 0.025$ TAO per ALPHA. A trader who stakes 100 TAO receives
\[
    \Delta y = \frac{40{,}000 \times 100}{1{,}000 + 100} = 3{,}636.4 \;\text{ALPHA}.
\]
After the trade the reserves are $x = 1{,}100$ and $y = 36{,}363.6$, the invariant is still $k = 4 \times 10^7$, and the new price is $P' = 1{,}100/36{,}363.6 \approx 0.0302$. A deposit equal to 10\% of the TAO reserve moved the price by 21\%. This amplification of flows into price changes is the mechanism that generates the CEV dynamics derived in \Cref{sec:model}. Note that a pool ten times deeper ($k = 4 \times 10^9$) would produce correspondingly smaller price impact from the same trade, foreshadowing the Black--Scholes limit of infinite pool depth.
\end{example}

\subsection{Generalized Constant-Function AMMs}

The constant-product design is a special case of the \emph{constant-weighted-product} family:
\begin{equation}\label{eq:cwp}
    x^w \cdot y^{1-w} = K,
\end{equation}
where $w \in (0,1)$ is the weight of token $X$ (the numeraire) and $K > 0$. For $w = 1/2$, this reduces to the constant-product AMM with $K = \sqrt{k}$. Platforms such as Balancer implement arbitrary weights, enabling asymmetric exposure \citep{martinelli2019balancer}. The marginal price under \eqref{eq:cwp} is
\begin{equation}\label{eq:general_price}
    P = \frac{1-w}{w} \cdot \frac{x}{y}.
\end{equation}

\subsection{Bittensor and Dynamic TAO}\label{sec:bittensor}

Bittensor is a decentralized network for AI services organized into \emph{subnets}, each specializing in a particular machine learning task. Since February 2025, the network employs Dynamic TAO (dTAO), under which each subnet $i$ maintains an independent constant-product AMM with reserves $(x_i, y_i)$, where $x_i$ is the TAO (native currency) reserve and $y_i$ is the subnet-specific ``alpha'' ($\alpha_i$) reserve \citep{bittensor2025dtao}.

Users ``stake'' TAO into a subnet by swapping TAO for alpha through the AMM, and ``unstake'' by swapping alpha back for TAO. The alpha price in TAO is $P_i = x_i / y_i$. Three features distinguish this setting from standard AMMs:

\begin{enumerate}[leftmargin=*]
    \item \textbf{No external market.} Alpha tokens trade exclusively through the on-chain AMM. There is no order book, no off-chain market, and no external price oracle. The AMM is the sole price discovery mechanism.

    \item \textbf{Pool liquidity injection.} Each block (approximately every 12 seconds), the protocol injects TAO into the subnet's AMM reserve. The TAO allocated to subnet $i$ is
    \begin{equation}\label{eq:emission}
        e_{\mathrm{TAO},i} = E_{\text{block}} \cdot \frac{\max(S_i - L, 0)}{\sum_{j} \max(S_j - L, 0)},
    \end{equation}
    where $E_{\text{block}}$ is the total TAO block emission (currently 0.5 TAO per block, following the December 2025 halving from 1 TAO per block), $S_i$ is the exponentially weighted moving average of net TAO flows into subnet $i$, and $L$ is a lower threshold. Simultaneously, alpha is injected into the pool in proportion $\Delta\alpha_i = \Delta\tau_i / P_i$, preserving the current spot price while deepening liquidity \citep{bittensor2025dtao}. This grows the invariant $k_i = x_i \cdot y_i$ over time without changing the price, mechanically dampening price volatility.

    \item \textbf{Alpha participant emissions.} Independently of the pool injection, each subnet emits alpha to \emph{participants} at a base rate of approximately 1 alpha per block, subject to its own halving schedule (both TAO and each alpha are capped at 21 million). At the end of each tempo (360 blocks), this participant alpha is distributed: 41\% to miners, 41\% to validators and their stakers, and 18\% to the subnet owner. This alpha does not enter the pool. It increases circulating supply outside the pool and can exert selling pressure if recipients swap their alpha back to TAO through the AMM.
\end{enumerate}

\begin{example}[Emission mechanics]\label{ex:emissions}
Suppose subnet $i$ has reserves $x_i = 1{,}000$ TAO and $y_i = 40{,}000$ ALPHA, so $P_i = 0.025$ TAO per ALPHA. In a single block, the pool injection channel adds $\Delta\tau_i = 0.01$ TAO\footnote{The total emission is 0.5 TAO per block (post-halving). With roughly 60 active subnets competing via \eqref{eq:emission}, 0.01 TAO per block is illustrative of an average subnet.} and $\Delta\alpha_i = 0.01/0.025 = 0.4$ ALPHA to the reserves. The new reserves are $x_i = 1{,}000.01$ and $y_i = 40{,}000.4$, the invariant grows from $k = 4 \times 10^7$ to $k' \approx 4.0001 \times 10^7$, and the price is unchanged at $P_i = 0.025$. Separately, the protocol emits 1 ALPHA to participants (0.41 to miners, 0.41 to validators and stakers, 0.18 to the subnet owner). This alpha does not enter the pool but increases circulating supply. If recipients sell it through the AMM, it exerts downward pressure on the alpha price. Over one tempo (360 blocks, roughly 72 minutes), the pool injection deepens reserves by 3.6 TAO and 144 ALPHA, while 360 ALPHA is distributed to participants.
\end{example}

\section{The Model}\label{sec:model}

\subsection{Setup and Notation}\label{sec:setup}

Consider a single AMM pool with reserves $(x(t), y(t))$ satisfying the constant-weighted-product invariant \eqref{eq:cwp}. Let $P(t) = \frac{1-w}{w} \cdot \frac{x(t)}{y(t)}$ denote the marginal price at time $t$. Define the \emph{net flow process} $F(t)$ as the cumulative net TAO staked into the pool by time $t$.

\begin{definition}[Stochastic Flow Process]\label{def:flow}
The net flow process satisfies
\begin{equation}\label{eq:flow}
    \diff F(t) = \mu_F \diff t + \sigma_F \diff W(t),
\end{equation}
where $\mu_F \in \R$ is the drift (expected net inflow rate), $\sigma_F > 0$ is the flow volatility, and $W(t)$ is a standard Brownian motion on a filtered probability space $(\Omega, \mathcal{F}, \{\mathcal{F}_t\}, \mathbb{P})$.
\end{definition}

The assumption that flow follows a Brownian diffusion is a continuous-time approximation to discrete staking and unstaking events. When individual staking amounts are small relative to pool size (the standard ``many small traders'' assumption), this is justified by the functional central limit theorem. We relax this assumption in \Cref{sec:discussion} by considering jump-diffusion flows.

\subsection{Reserve Dynamics}

In the absence of emissions, the reserves evolve according to the AMM mechanics:
\begin{equation}\label{eq:reserve_x}
    \diff x(t) = \diff F(t),
\end{equation}
and the constant-weighted-product constraint \eqref{eq:cwp} determines $y(t)$ implicitly:
\begin{equation}\label{eq:reserve_y}
    y(t) = \left(\frac{K}{x(t)^w}\right)^{1/(1-w)}.
\end{equation}
Differentiating via It\^{o}'s lemma yields
\begin{equation}\label{eq:dy}
    \diff y = -\frac{w}{1-w} \cdot \frac{y}{x} \diff x + \frac{w(2w-1)}{2(1-w)^2} \cdot \frac{y}{x^2} (\diff x)^2.
\end{equation}

\subsection{Derivation of the Price Process}

We now derive the central result: the stochastic differential equation governing the AMM token price.

\begin{theorem}[AMM Token Price Process]\label{thm:main}
Under the constant-weighted-product AMM \eqref{eq:cwp} with net flow process \eqref{eq:flow}, the marginal token price $P(t)$ satisfies the CEV stochastic differential equation
\begin{equation}\label{eq:cev}
    \diff P = \mu(P)\diff t + \delta P^w \diff W(t),
\end{equation}
where the CEV exponent is $\beta = w$ (the numeraire weight), the volatility parameter is
\begin{equation}\label{eq:delta}
    \delta = \frac{1}{1-w} \left(\frac{1-w}{w}\right)^{1-w} K^{-1} \sigma_F,
\end{equation}
and the drift is
\begin{equation}\label{eq:drift}
    \mu(P) = \frac{1}{1-w} \left(\frac{1-w}{w}\right)^{1-w} K^{-1} \mu_F P^{w} + \frac{w}{2(1-w)^2} \left(\frac{1-w}{w}\right)^{2(1-w)} K^{-2} \sigma_F^2 P^{2w-1}.
\end{equation}
\end{theorem}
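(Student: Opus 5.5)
The plan is to avoid the two-dimensional reserve system and write the price as a deterministic function of the single reserve $x(t)$. Then one application of It\^{o}'s lemma to $\diff x = \diff F$ from \eqref{eq:reserve_x} gives the SDE. Write $c = (1-w)/w$. Substituting \eqref{eq:reserve_y} into $P = c\,x/y$ gives
\[
P = g(x) := c\, K^{-1/(1-w)}\, x^{1/(1-w)},
\]
because $x/y = x \cdot x^{w/(1-w)} K^{-1/(1-w)}$. So $P$ is a smooth power of $x$ on $x>0$. We therefore do not need the $y$-dynamics \eqref{eq:dy} at all; they are only a consistency check.

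Next, compute the derivatives of $g$ in the convenient form
\[
g'(x) = \frac{1}{1-w}\frac{P}{x}, \qquad g''(x) = \frac{w}{(1-w)^2}\frac{P}{x^2}.
\]
It\^{o}'s lemma with $(\diff x)^2 = \sigma_F^2 \diff t$ then yields
\[
\diff P = \Bigl(\tfrac{1}{1-w}\tfrac{P}{x}\mu_F + \tfrac{w}{2(1-w)^2}\tfrac{P}{x^2}\sigma_F^2\Bigr)\diff t + \tfrac{1}{1-w}\tfrac{P}{x}\sigma_F \diff W .
\]

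The key step is to eliminate $x$ by inverting $g$:
\[
x = K\, c^{-(1-w)} P^{1-w}.
\]
This gives
\[
\frac{P}{x} = c^{1-w} K^{-1} P^{w}, \qquad \frac{P}{x^2} = c^{2(1-w)} K^{-2} P^{2w-1}.
\]
Substituting, the diffusion coefficient becomes $\frac{1}{1-w} c^{1-w} K^{-1}\sigma_F P^w = \delta P^w$, which is \eqref{eq:delta} and identifies $\beta = w$. The drift terms become exactly the two terms of \eqref{eq:drift}: one proportional to $\mu_F P^w$ and one proportional to $\sigma_F^2 P^{2w-1}$.

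I expect no conceptual obstacle; the main risk is bookkeeping in the exponents of $c$ and $K$ when inverting $g$. I will verify it by checking the $w=1/2$ case: there $K=\sqrt{k}$ and $P = x^2/k$, so $\diff P = (2x/k)\diff x + (1/k)(\diff x)^2$ with $x = \sqrt{kP}$. This gives $\delta = 2\sigma_F/\sqrt{k}$ and a constant drift term $\sigma_F^2/k$, which should match \eqref{eq:delta}--\eqref{eq:drift}.

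I will also note that the derivation holds up to the first time $x$ hits $0$, since $g$ is only smooth for $x>0$, and that $K$ is held fixed because emissions are excluded here.
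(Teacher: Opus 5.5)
Your proposal is correct and is essentially the paper's own proof. Both write $P$ as the power $B x^{1/(1-w)}$ of the TAO reserve alone, apply It\^{o}'s lemma to $\diff x = \diff F$, and invert $x = (P/B)^{1-w}$ to turn the coefficients (your $P/x$ and $P/x^2$, the paper's $x^{\alpha-1}$ and $x^{\alpha-2}$) into $P^{w}$ and $P^{2w-1}$; you additionally write out the drift the paper calls ``analogous'' and note the localization before $x$ reaches $0$.

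One caution if you run your consistency check against \eqref{eq:dy}: the correct It\^{o} coefficient there is $\frac{w}{2(1-w)^2}\frac{y}{x^2}$, not the printed $\frac{w(2w-1)}{2(1-w)^2}\frac{y}{x^2}$ (at $w=1/2$, $y=k/x$ gives $\frac{y}{x^2}$, not $0$). Neither argument uses \eqref{eq:dy}, so this does not affect the theorem.
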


\begin{proof}
From \eqref{eq:cwp}, the price $P = \frac{1-w}{w} \cdot \frac{x}{y}$ can be expressed purely as a function of $x$:
\begin{equation}\label{eq:P_of_x}
    P(x) = \frac{1-w}{w} \cdot x \cdot \left(\frac{x^w}{K}\right)^{1/(1-w)} = \frac{1-w}{w} \cdot K^{-1/(1-w)} \cdot x^{1/(1-w)}.
\end{equation}
Let $\alpha \equiv 1/(1-w)$ and $B \equiv \frac{1-w}{w} \cdot K^{-\alpha}$, so that $P = B x^\alpha$. By It\^{o}'s lemma,
\begin{align}
    \diff P &= B \alpha x^{\alpha-1} \diff x + \tfrac{1}{2} B \alpha(\alpha-1) x^{\alpha-2} (\diff x)^2 \notag \\
    &= B \alpha x^{\alpha-1} (\mu_F \diff t + \sigma_F \diff W) + \tfrac{1}{2} B \alpha(\alpha-1) x^{\alpha-2} \sigma_F^2 \diff t. \label{eq:dP_x}
\end{align}
Substituting $x = (P/B)^{1/\alpha}$:
\begin{align}
    x^{\alpha - 1} &= (P/B)^{(\alpha-1)/\alpha} = (P/B)^{1 - 1/\alpha} = (P/B)^w = B^{-w} P^w, \\
    x^{\alpha - 2} &= (P/B)^{(\alpha-2)/\alpha} = (P/B)^{2w-1} = B^{1-2w} P^{2w-1}.
\end{align}
Collecting terms, the diffusion coefficient of $\diff P$ is
\[
    B \alpha \cdot B^{-w} P^w \cdot \sigma_F = \alpha B^{1-w} \sigma_F \cdot P^w.
\]
Defining $\delta \equiv \alpha B^{1-w} \sigma_F$ and noting $B^{1-w} = \left(\frac{1-w}{w}\right)^{1-w} K^{-\alpha(1-w)} = \left(\frac{1-w}{w}\right)^{1-w} K^{-1}$, we obtain $\delta = \frac{1}{1-w}\left(\frac{1-w}{w}\right)^{1-w} K^{-1} \sigma_F$. The drift follows analogously.
\end{proof}

\begin{corollary}[Constant-Product AMM]\label{cor:cp}
For the standard constant-product AMM ($w = 1/2$, $K = \sqrt{k}$ where $k = xy$), the price process is
\begin{equation}\label{eq:cev_cp}
    \diff P = \left(\frac{2\mu_F}{\sqrt{k}} \sqrt{P} + \frac{\sigma_F^2}{k}\right) \diff t + \frac{2\sigma_F}{\sqrt{k}} \sqrt{P}\, \diff W(t).
\end{equation}
The CEV exponent is $\beta = 1/2$ and the volatility parameter is $\delta = 2\sigma_F / \sqrt{k}$.
\end{corollary}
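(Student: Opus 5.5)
The plan is to specialize \Cref{thm:main} to $w = 1/2$ and then confirm the result by a direct It\^{o} computation on the constant-product invariant. I assume, as in \Cref{thm:main}, that there are no emissions, so $k = xy$ stays constant along the path and $K = \sqrt{k}$ is a fixed constant. With $w = 1/2$ we have $(1-w)/w = 1$ and $1/(1-w) = 2$.

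\textbf{Step 1 (volatility and exponent).} Setting $w = 1/2$ in \eqref{eq:cev} gives the CEV exponent $\beta = 1/2$. Substituting into \eqref{eq:delta} gives
\[
\delta = 2 \cdot 1^{1/2} \cdot K^{-1}\sigma_F = \frac{2\sigma_F}{\sqrt{k}}.
\]

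\textbf{Step 2 (drift).} I substitute into \eqref{eq:drift} term by term.
\begin{itemize}[leftmargin=*]
\item The first term becomes $2K^{-1}\mu_F P^{1/2} = 2\mu_F\sqrt{P}/\sqrt{k}$.
\item In the second term the coefficient is $\frac{w}{2(1-w)^2} = \frac{1/2}{2\cdot 1/4} = 1$. The factor $\left(\frac{1-w}{w}\right)^{2(1-w)}$ equals $1$, $K^{-2} = 1/k$, and the power $P^{2w-1} = P^0 = 1$. So the term is the constant $\sigma_F^2/k$.
\end{itemize}
Together these give exactly the drift in \eqref{eq:cev_cp}.

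\textbf{Step 3 (direct check).} As an independent verification that does not depend on the general exponent bookkeeping, I use $y = k/x$, so that $P = x^2/k$. Applying It\^{o}'s lemma with $\diff x = \diff F$ from \eqref{eq:reserve_x} gives
\[
\diff P = \frac{2x}{k}\,\diff F + \frac{1}{k}(\diff F)^2 .
\]
Substituting $x = \sqrt{kP}$ and $(\diff F)^2 = \sigma_F^2\,\diff t$ gives
\[
\diff P = 2\sqrt{P/k}\,(\mu_F\,\diff t + \sigma_F\,\diff W) + \frac{\sigma_F^2}{k}\,\diff t,
\]
which is \eqref{eq:cev_cp}.

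There is no real obstacle here. The only point needing care is the second drift term: because $2w - 1 = 0$, the Itô correction $\sigma_F^2/k$ is a price-independent constant, and it should not be mistaken for a term proportional to $P$.
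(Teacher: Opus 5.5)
Your proposal is correct and follows the paper's route: the corollary comes from substituting $w = 1/2$ (so $(1-w)/w = 1$, $1/(1-w) = 2$, $K = \sqrt{k}$, $P^{2w-1} = 1$) into \eqref{eq:delta} and \eqref{eq:drift} of \Cref{thm:main}. Your direct It\^{o} check on $P = x^2/k$ is just the proof of \Cref{thm:main} specialized to $\alpha = 2$, and it correctly confirms both the $\sqrt{P}$ diffusion coefficient and the constant It\^{o} drift term $\sigma_F^2/k$.
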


\begin{corollary}[Black--Scholes Limit]\label{cor:bs_limit}
As pool depth $K \to \infty$, the volatility parameter $\delta \to 0$ and the price becomes deterministic. For large but finite $K$, with $P$ near $P_0$, the process approximates GBM:
\begin{equation}\label{eq:bs_approx}
    \frac{\diff P}{P} \approx \tilde{\mu}\, \diff t + \sigma_{\mathrm{eff}}\, \diff W(t), \qquad \sigma_{\mathrm{eff}} = \delta P_0^{w-1},
\end{equation}
and Black--Scholes applies with volatility $\sigma = \sigma_{\mathrm{eff}}$.
\end{corollary}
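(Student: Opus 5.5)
The plan is to read everything off \Cref{thm:main}, then linearize the diffusion coefficient around the initial price.

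\emph{Deterministic limit.} By \eqref{eq:delta}, $\delta$ is a $w$-dependent constant times $K^{-1}\sigma_F$, so $\delta\to 0$ as $K\to\infty$ with $\sigma_F$ and $w$ fixed. The drift \eqref{eq:drift} has a $K^{-1}$ term and a $K^{-2}$ term, and both vanish as well. For the limit claim I would write the integral form
\[
P(t)-P_0=\int_0^t \mu(P(s))\,\diff s+\int_0^t\delta P(s)^w\,\diff W(s).
\]
The stochastic integral has second moment $\delta^2\int_0^t \E[P^{2w}]\,\diff s$, which tends to zero provided $\E[P^{2w}]$ stays bounded uniformly in $K$. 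For this uniform bound I would not work with $P$ directly. Instead I would use \eqref{eq:P_of_x}, $P=Bx^{1/(1-w)}$ with $x=x_0+F$, and reparametrize so that $P_0$ is held fixed, meaning $x_0$ grows with $K$. Then $P(t)/P_0=(1+F(t)/x_0)^{1/(1-w)}$. Since $F$ is a Gaussian process that does not depend on $K$, the quotient $F/x_0\to 0$, and therefore $P(t)\to P_0$ uniformly on compacts in probability.

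\emph{GBM approximation.} Divide \eqref{eq:cev} by $P$ to get
\[
\frac{\diff P}{P}=\frac{\mu(P)}{P}\,\diff t+\delta P^{w-1}\,\diff W.
\]
Freeze the coefficients at $P=P_0$. This gives $\sigma_{\mathrm{eff}}=\delta P_0^{w-1}$ and $\tilde\mu=\mu(P_0)/P_0$. The error in the volatility coefficient is
\[
\delta\bigl|P^{w-1}-P_0^{w-1}\bigr|\le \delta\,(1-w)\,\xi^{w-2}\,|P-P_0|
\]
by the mean value theorem, with $\xi$ between $P$ and $P_0$. The relative error is therefore $O(|P-P_0|/P_0)$, which is small by the previous step when $K$ is large. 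Once the log-price is approximately an arithmetic Brownian motion with constant volatility $\sigma_{\mathrm{eff}}$ over $[0,T]$, the standard Black--Scholes derivation applies under the risk-neutral measure, where the drift is replaced by $r$ and so the form of $\tilde\mu$ is irrelevant. That gives pricing with $\sigma=\sigma_{\mathrm{eff}}$.

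\emph{Main obstacle.} The hard part is stating the claim so it is not vacuous. Since $\sigma_{\mathrm{eff}}\to 0$, comparing with BS at a fixed volatility is trivial. The meaningful statement is relative: the ratio of instantaneous volatility to $\sigma_{\mathrm{eff}}$ tends to $1$ over the relevant horizon. I would formalize it as
\[
\sup_{t\le T}\,\bigl|\,(P(t)/P_0)^{w-1}-1\,\bigr|\to 0 \quad\text{in probability.}
\]
I would prove this from the explicit representation $P(t)/P_0=(1+F(t)/x_0)^{1/(1-w)}$ together with Doob's inequality applied to $\sup_{t\le T}|F(t)|$. The Black--Scholes statement should then be presented as a first-order approximation, not a theorem.
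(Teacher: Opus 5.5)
Your proposal is sound, and its core is the paper's own reasoning. The paper gives no separate proof; it reads $\delta\propto K^{-1}$ off \eqref{eq:delta} and freezes the return volatility $\delta P^{w-1}$ at $P_0$. Your pathwise route is stronger, and you can push it further than you do. From $P/P_0=(x/x_0)^{1/(1-w)}$ you get the exact identity $(P(t)/P_0)^{w-1}=x_0/x(t)$. So the return volatility is $\sigma_F/((1-w)x(t))$ and $\sigma_{\mathrm{eff}}=\sigma_F/((1-w)x_0)$. Let $S=\sup_{t\le T}|F(t)|$, which is a.s.\ finite and does not depend on $K$. On $\{S<x_0\}$ you have $\sup_{t\le T}|x_0/x(t)-1|\le S/(x_0-S)$. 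The ratio statement therefore holds almost surely, with no moment bound or Doob inequality, and the same event keeps the reserve positive. The identity also shows that holding $P_0$ fixed is essential, not cosmetic: the relevant depth is $x_0$. If $K\to\infty$ with $x_0$ fixed, then $\delta\to 0$ but $P_0\to 0$, and $\sigma_{\mathrm{eff}}$ does not shrink at all. The paper's ``$K\to\infty$'' phrasing hides this.

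Your last step needs two qualifications.

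\emph{The measure.} Your ratio limit is a $\mathbb{P}$-statement, where the drift is $O(K^{-1})$. Under $\Q$ the drift $rP$ survives the limit. The reserve is then no longer $x_0$ plus a $K$-independent Gaussian process; it grows like $x_0\ee^{(1-w)rt}$. So the local volatility along $\Q$-paths tends to $\sigma_{\mathrm{eff}}\ee^{-(1-w)rt}$, not $\sigma_{\mathrm{eff}}$. In a local-volatility model the drift is therefore not irrelevant. Linearizing $\diff\log P$ about $P_0\ee^{rt}$ gives the limiting Black--Scholes volatility $\sigma_{\mathrm{eff}}\bigl[(1-\ee^{-2(1-w)rT})/(2(1-w)rT)\bigr]^{1/2}$. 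This matches $\sigma_{\mathrm{eff}}$ only up to $O(rT)$, which is what ``with $P$ near $P_0$'' has to mean.

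\emph{The strikes.} Control of the volatility ratio on typical paths governs relative pricing only for strikes within $O(\sigma_{\mathrm{eff}}\sqrt{T})$ of the forward. At fixed moneyness, the option value comes from large deviations on which $P$ is far from $P_0$. By \Cref{prop:skew}, the 20\%-OTM put implied volatility stays about 6\% above ATM at every depth. Your decision to call the Black--Scholes claim a first-order approximation is right; it should be stated for near-the-money strikes.
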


\begin{remark}[Elasticity spectrum]
The CEV exponent $\beta = w$ reveals a fundamental connection between AMM design and price dynamics:
\begin{itemize}
    \item $w = 1/2$ (constant-product): $\beta = 1/2$, variance decreases with price. This is the standard Uniswap/Bittensor case.
    \item $w \to 1$ (pool dominated by numeraire): $\beta \to 1$, approaching GBM and Black--Scholes.
    \item $w \to 0$ (pool dominated by alpha): $\beta \to 0$, approaching the Bachelier (normal) model.
\end{itemize}
AMM designers thus implicitly select a volatility structure through their choice of pool weights.
\end{remark}

\subsection{Properties of the CEV Price Process}\label{sec:properties}

\begin{proposition}[Volatility structure]\label{prop:vol}
Under \eqref{eq:cev}, the instantaneous return volatility is
\begin{equation}\label{eq:return_vol}
    \sigma_{\mathrm{ret}}(P) = \delta P^{w-1} = \delta P^{\beta - 1}.
\end{equation}
For the constant-product AMM ($\beta = 1/2$), $\sigma_{\mathrm{ret}}(P) = \delta / \sqrt{P}$: volatility is inversely proportional to the square root of price.
\end{proposition}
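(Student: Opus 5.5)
The plan is to read the return volatility directly off the diffusion coefficient of \eqref{eq:cev} in \Cref{thm:main}, which we take as given. The instantaneous return volatility is, by definition, the diffusion coefficient of $\diff P/P$: the $\sqrt{\diff t}$-scale magnitude of relative price changes. It is therefore determined by the quadratic variation of $\log P$, or equivalently of $\diff P/P$. So the first step is to divide \eqref{eq:cev} by $P$, which is legitimate on $\{P>0\}$. We will note that $P = B x^{1/(1-w)}$ from \eqref{eq:P_of_x} is strictly positive as long as the TAO reserve $x$ is positive, so the statement is understood up to the first hitting time of zero.

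Dividing gives
\[
\frac{\diff P}{P} = \frac{\mu(P)}{P}\,\diff t + \delta P^{w-1}\,\diff W(t).
\]
This has the form $\tilde\mu(P)\,\diff t + \sigma_{\mathrm{ret}}(P)\,\diff W$ with $\sigma_{\mathrm{ret}}(P) = \delta P^{w-1}$. Equivalently, It\^o's lemma applied to $\log P$ gives $\diff\langle \log P\rangle_t = \delta^2 P^{2w-2}\,\diff t$. Its square root is again $\delta P^{w-1}$, since $\delta>0$ (because $\sigma_F>0$ and $K>0$ in \eqref{eq:delta}). The drift plays no role in either computation. Substituting $\beta = w$ from \Cref{thm:main} yields the second form, $\delta P^{\beta-1}$.

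For the constant-product specialization, we set $w=1/2$. \Cref{cor:cp} already gives $\beta=1/2$ and $\delta = 2\sigma_F/\sqrt{k}$, so $\sigma_{\mathrm{ret}}(P) = \delta P^{-1/2} = \delta/\sqrt{P}$, which is inversely proportional to $\sqrt{P}$ as claimed.

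There is no real obstacle here. The only point needing care is the positivity of $P$ when dividing. We will handle it with the hitting-time remark above, noting that for $\beta=1/2$ the CEV process can reach zero, in which case the formula applies only before that time.
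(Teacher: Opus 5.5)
Your proposal is correct and takes essentially the paper's approach: the paper gives no separate proof and treats the proposition as immediate from \Cref{thm:main}, i.e., dividing \eqref{eq:cev} by $P$ to read off the diffusion coefficient $\delta P^{w-1}$ of $\diff P/P$, then setting $w=\beta=1/2$ to get $\delta/\sqrt{P}$. Your restriction to times before $P$ first hits zero (the TAO reserve being exhausted) is a harmless refinement that the paper leaves implicit.
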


This property has a natural economic interpretation. When the alpha token price is high, the TAO reserve $x$ is large (since $P \propto x^2/k$), meaning the pool is deep in TAO terms. A given staking flow $\diff F$ then produces a smaller proportional change in $x$, hence a smaller proportional price impact. Conversely, when the price is low, the TAO reserve is shallow, and the same flow produces larger price swings.

\begin{proposition}[Implied volatility skew]\label{prop:skew}
The CEV model with $\beta < 1$ generates a negative implied volatility skew: out-of-the-money puts have higher Black--Scholes implied volatility than out-of-the-money calls. When implied volatilities are normalized by the ATM level, the skew shape depends only on $\beta$, not on the volatility parameter $\delta$ or pool depth $k$.
\end{proposition}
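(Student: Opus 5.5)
The proof has two parts: a scaling argument for the invariance claim, and a comparison argument for the sign of the skew. I will work under the risk-neutral CEV dynamics $\diff P = rP\,\diff t + \delta P^{\beta}\diff W$ with $\beta = w$, as given by \Cref{thm:main} after the usual change of measure. Throughout I parametrize moneyness by $m = K_s/P_0$, writing $K_s$ for the strike so it is not confused with the pool invariant. I also measure time so that the relevant volatility variable appears explicitly.

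\emph{Step 1: scaling (invariance in $\delta$ and $k$).} Define $X_t = P_t/P_0$. Then
\[
\diff X = rX\,\diff t + \delta P_0^{\beta-1} X^{\beta}\diff W ,
\]
so $X$ is a CEV process started at $1$ whose only volatility input is $\sigma_0 \equiv \delta P_0^{\beta-1}$, the ATM local volatility of \Cref{prop:vol}. Set $Y_t = X_t e^{-rt}$ and make the deterministic time change $\tau(t) = \int_0^t \sigma_0^2 e^{2(\beta-1)rs}\,\diff s$. Then $Y$ becomes a driftless CEV diffusion $\diff Y = Y^{\beta}\,\diff B_\tau$, with unit coefficient and started at $1$. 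Hence the law of $P_T/P_0$ depends on $(\beta, \sigma_0, r, T)$ only through $\beta$ and the single parameter $\tau(T) \approx \sigma_0^2 T$.

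The normalized put price $V/P_0$ is a function of $(m, \beta, \tau)$, and so is the Black--Scholes implied total variance $\Sigma^2 T$. Dividing by the ATM level gives $\Sigma(m)/\Sigma(1) = g(m;\beta,\tau)$. The pool invariant enters only through $\delta \propto K^{-1}\sigma_F$ in \eqref{eq:delta}, so it affects nothing except $\tau$. To get literal independence of $\delta$ and $k$, I will state the invariance at fixed ATM total implied variance, equivalently at fixed $\sigma_0^2 T$. In that normalization the skew curve depends only on $\beta$: for any two pools with different depths, the skew curves coincide once $T$ is rescaled to match $\tau$. I will also use the small-$\tau$ expansion of CEV implied volatility (Hagan--Woodward),
\[
\Sigma(m) \approx \sigma_0\, m^{(\beta-1)/2}\Big(1+O(\tau)\Big).
\]
This shows that at leading order the normalized skew is exactly $m^{(\beta-1)/2}$, a function of $\beta$ alone. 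For $\beta=1/2$ and $m=0.8$ it equals $0.8^{-1/4}\approx 1.057$, which is the ``roughly 6\%'' figure quoted in the introduction.

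\emph{Step 2: negative skew.} Since $\beta<1$, the local volatility $\sigma_0 x^{\beta-1}$ is strictly decreasing in $x$. I will apply Dupire/Gatheral's characterization, that implied volatility is asymptotically a path-average of local volatility along the most likely path. More rigorously, I will use the comparison result that a local-volatility model with strictly decreasing local volatility produces an implied volatility that is strictly decreasing in strike. The first-order expansion above gives monotonicity directly, since $m^{(\beta-1)/2}$ is decreasing. By put--call parity, OTM puts (low strikes) therefore carry higher implied volatility than OTM calls.

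\emph{Main obstacle.} The delicate part is the exact, non-asymptotic formulation. Implied volatility depends on $\tau$, so ``depends only on $\beta$'' is literally true only at leading order, or after fixing ATM total variance. I will prove the exact scaling statement in Step 1 and treat the $\tau$-dependence as a correction of order $O(\sigma_0^2T)$, using Hagan--Woodward. For monotonicity away from the asymptotic regime, I would first try a coupling argument: compare CEV prices at two strikes through the convexity of the normalized price in $\log m$. If that fails, I would fall back on the small-time expansion.
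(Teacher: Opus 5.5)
Your route differs from the paper's, and it is the more careful of the two. The paper derives universality from two steps, and simply cites the sign of the skew:
\begin{itemize}
\item[(i)] homogeneity, $C(\lambda P,\lambda K_{\mathrm{str}},T)=\lambda C(P,K_{\mathrm{str}},T)$;
\item[(ii)] the remark that $\kappa\propto\delta^{-2}$ rescales $a$ and $c$ by a common factor, while $a/c$ depends only on moneyness.
\end{itemize}
At fixed $\delta$, however, (i) fails for $\beta\neq1$: $\lambda P$ is a CEV process with coefficient $\delta\lambda^{1-\beta}$, which is exactly why your $X=P/P_0$ carries $\sigma_0=\delta P_0^{\beta-1}$. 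In (ii), a common rescaling of $(a,c)$ does change $\chi^2(a;b+2,c)$.

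Your time change makes explicit what the closed form actually encodes. Substituting $\delta=\sigma_0P^{1-\beta}$ gives $c=1/((1-\beta)^2\tau)$ and $a=(me^{-rT})^{2(1-\beta)}c$, with $\tau=\sigma_0^2(1-e^{-2(1-\beta)rT})/(2(1-\beta)r)$. So $C/P$ is a function of $(me^{-rT},\beta,\tau)$. At fixed maturity, $\delta$ and $k$ therefore do enter the normalized skew, weakly and at $O(\tau)$. At fixed $T$, the paper's ``once the ATM level is fixed'' merely restates that $\delta$ is fixed. Your two reformulations are what can actually be proved:
\begin{itemize}
\item[(a)] the curves coincide across pool depths once $\tau$ is matched (in forward moneyness $me^{-rT}$ when $r\neq0$);
\item[(b)] the skew is $\delta$-independent at leading order in $\sigma_0^2T$.
\end{itemize}
The second explains the near-overlap in the paper's figure.

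There are two points to fix.

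\emph{The small-$\tau$ limit.} It is not $\sigma_0 m^{(\beta-1)/2}(1+O(\tau))$. With $f_{\mathrm{av}}=\sqrt{P_0K_{\mathrm{str}}}$, the Hagan--Woodward expansion carries a $\tau$-independent factor $1+O((1-\beta)^2\log^2 m)$. The exact limit (Berestycki--Busca--Florent) is the log-harmonic mean of local volatility, $\Sigma_0(m)/\sigma_0=(1-\beta)\log m/(m^{1-\beta}-1)$. This still depends only on $(\beta,m)$, and gives $1.0568$ rather than $1.0574$ at $m=0.8$, $\beta=1/2$. Writing $u=(1-\beta)\log m$, it equals $u/(e^u-1)$, which is strictly decreasing in $m$. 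It therefore gives the leading-order negative skew directly.

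\emph{The general comparison principle.} The claim that strictly decreasing local volatility implies strictly decreasing implied volatility at every maturity is not a standard theorem in that generality. For the non-asymptotic sign, either cite CEV-specific results as the paper does, or state the skew claim only at leading order in $\tau$.
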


\begin{proof}
The negative skew is a standard property of the CEV model with $\beta < 1$ \citep{cox1996constant,davydov2003pricing}. For the universality of the normalized skew, observe the following. First, $C(\lambda P, \lambda K_{\mathrm{str}}, T) = \lambda C(P, K_{\mathrm{str}}, T)$ for all $\lambda > 0$ (homogeneity of degree one in spot and strike), so implied volatility depends only on the moneyness ratio $K_{\mathrm{str}}/P$. Second, the parameters of the non-central chi-squared distribution are $\kappa = 2r / [\delta^2 (1-\beta)(e^{2r(1-\beta)T} - 1)]$, $a = \kappa K_{\mathrm{str}}^{2(1-\beta)}$, $c = \kappa P^{2(1-\beta)} e^{2r(1-\beta)T}$, and $b = 1/(1-\beta)$. Since $\kappa \propto \delta^{-2}$, changing $\delta$ (equivalently, changing pool depth $k$) rescales both $a$ and $c$ by the same multiplicative factor, while the ratio $a/c = (K_{\mathrm{str}}/P)^{2(1-\beta)} e^{-2r(1-\beta)T}$ depends only on moneyness, $\beta$, and calendar parameters. The degrees of freedom $b$ depend only on $\beta$. The CEV call price, and hence the implied volatility at each moneyness, is therefore determined by $\beta$ once the ATM level is fixed. It follows that the \emph{normalized} skew $\sigma_{\mathrm{IV}}(K)/\sigma_{\mathrm{ATM}}$ is invariant to $\delta$ and $k$.
\end{proof}

\begin{proposition}[Boundary behavior at zero]\label{prop:boundary}
For $\beta = 1/2$, the CEV process \eqref{eq:cev} has $P = 0$ as a boundary point. The classification depends on the drift:
\begin{enumerate}[leftmargin=*]
    \item Under the risk-neutral measure $\Q$ (drift $rP$ with $r > 0$), the Feller test shows that the scale function $s(P) = \int^P \exp\!\left(-2r/(\delta^2 u)\right)\diff u$ diverges as $P \to 0^+$, so zero is an \emph{inaccessible} (entrance) boundary: the process cannot reach zero in finite time, and the pricing formula \eqref{eq:call} is well-defined without boundary correction.
    \item Under the physical measure $\mathbb{P}$, when the drift $\mu(P)$ is small relative to $\delta^2$ (i.e., $2\mu_F / (\sigma_F^2 \sqrt{k}) < 1$), the speed measure is integrable near zero and the boundary is \emph{attainable}: a sufficiently unfavorable sequence of outflows can drain the TAO reserve to zero.
\end{enumerate}
Economically, $P = 0$ corresponds to a fully drained TAO reserve ($x = 0$), at which point the AMM cannot quote a price. Once the TAO reserve is exhausted, no further trades are possible without external injection (e.g., emissions). For option pricing, the inaccessibility of zero under $\Q$ ensures that the non-central chi-squared formula accounts correctly for the boundary.
\end{proposition}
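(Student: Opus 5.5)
The plan is to run Feller's boundary classification for the one-dimensional diffusion \eqref{eq:cev} with $\beta = 1/2$, separately under each measure, using only the scale function $s$ and the speed measure $m$ near $P=0$. For a generic drift $b(P)$ and diffusion coefficient $a(P)=\delta^2 P$, the scale density is $s'(P) = \exp\!\bigl(-\int^P 2b(u)/a(u)\,\diff u\bigr)$ and the speed density is $m'(P) = 2/(a(P)s'(P))$. Zero is attainable iff $\int_{0^+} m'(P)\,(s(P)-s(0^+))\,\diff P<\infty$ with $s(0^+)$ finite. It is an entrance boundary when it is not attainable but $\int_{0^+} s'(P)\,(m(P)-m(0^+))\,\diff P<\infty$. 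I take these Feller criteria as given.

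\textbf{Part 1.} Under $\Q$ I will set $b(P)=rP$ and, following the statement, take the scale function in the form $s(P)=\int^P \exp(-2r/(\delta^2 u))\,\diff u$. The first step is to compute the behaviour of $s(P)-s(0^+)$ and of $m'(P)=2/(\delta^2 P\, s'(P))$ as $P\to 0^+$. The integrand $\exp(2r/(\delta^2 P))/P$ of $m'$ blows up faster than any power of $1/P$. The second step is to show that this blow-up makes the Feller attainability integral diverge, so zero cannot be reached in finite time. The divergence of the speed-related quantity is what I interpret as the ``divergence'' of the statement. The third step is to check the entrance condition with the complementary integral. Finally, I conclude that the non-central $\chi^2$ law underlying \eqref{eq:call} is a genuine transition density with no boundary mass. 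This lets me invoke the standard CEV pricing results \citep{schroder1989computing,davydov2003pricing} without a reflecting or absorbing correction.

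\textbf{Part 2.} Under $\mathbb{P}$ I will use the drift \eqref{eq:drift} from \Cref{cor:cp}, $b(P)=\tfrac{2\mu_F}{\sqrt k}\sqrt P + \tfrac{\sigma_F^2}{k}$, together with $a(P)=\tfrac{4\sigma_F^2}{k}P$. Then
\[
2b(u)/a(u) = \frac{\mu_F\sqrt k}{\sigma_F^2}\,u^{-1/2} + \frac{1}{2u},
\]
so $s'(P)\sim P^{-1/2}\exp(-c\sqrt P)$ with $c = 2\mu_F\sqrt k/\sigma_F^2$. This makes $s$ finite at $0$, and $m'(P)\sim \mathrm{const}\cdot P^{-1/2}$ is integrable near zero. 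The remaining work is to track how the $\mu_F$ term competes with the Itô term. I would also verify that the stated ratio condition $2\mu_F/(\sigma_F^2\sqrt k)<1$ is exactly what keeps the product $m'(P)(s(P)-s(0^+))$ integrable, giving attainability.

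An equivalent route is to work in the reserve variable. By \eqref{eq:P_of_x}, $x=\sqrt{kP}$, and by \eqref{eq:reserve_x} the reserve $x$ is Brownian motion with drift, which hits $0$ in finite time with positive probability. This is a useful sanity check and also gives the economic reading that $P=0$ means $x=0$.

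\textbf{Main obstacle.} I expect the hardest step to be Part 1: reconciling the scale function exactly as written with the Feller integrals. I would first carry out the asymptotic analysis of $\exp(\pm 2r/(\delta^2 P))$ near zero carefully, using Laplace-type estimates. Only if that fails to produce divergence would I re-derive the scale density directly from $b=rP$, $a=\delta^2P$ and then adjust the argument accordingly.
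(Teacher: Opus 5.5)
\textbf{Gap: Part 1 is false as stated, so your Step 2 cannot succeed.} The paper gives no proof beyond the inline Feller sketch, and that sketch has the same defect, so following its route does not help.

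With $b(P)=rP$ and $a(P)=\delta^2P$, the ratio $2b/a=2r/\delta^2$ is constant. The scale density is therefore $s'(P)=\ee^{-2rP/\delta^2}$, not $\exp(-2r/(\delta^2P))$, and $m'(P)=2\ee^{2rP/\delta^2}/(\delta^2P)$. Then $s(0^+)$ is finite and
\[
m'(P)\bigl(s(P)-s(0^+)\bigr)=\frac{\ee^{2rP/\delta^2}-1}{rP}\longrightarrow\frac{2}{\delta^2},
\]
so the attainability integral converges. Meanwhile $\int_{0^+}m'(P)\bigl(s(\epsilon)-s(P)\bigr)\diff P=\infty$, so zero is an \emph{exit} boundary. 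Your fallback re-derivation would produce exactly this, and no adjustment turns it into inaccessibility.

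Step 2 fails even with the printed scale function. With $A=2r/\delta^2$, $\int_0^P\ee^{-A/u}\diff u\sim (P^2/A)\,\ee^{-A/P}$ is finite, so nothing ``diverges''. Its super-exponential smallness cancels the blow-up of your $m'$, leaving an integrand of order $P/r$.

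Nor can a divergent speed measure deliver ``inaccessible (entrance)''. Entrance and regular boundaries both require $m(0,\epsilon]<\infty$, so $m(0,\epsilon]=\infty$ leaves only exit or natural.

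Probabilistically, under $\Q$ the price is a supercritical Feller branching diffusion. After discounting, rescaling and a deterministic time change it is a squared Bessel process of dimension $0$. It is absorbed at zero with probability $\ee^{-2rP_0/\delta^2}$ eventually and $\ee^{-c/2}$ by time $T$, with $c$ from \eqref{eq:c}. Your own reserve check, redone under $\Q$, confirms this: it gives $\diff x=\bigl(\tfrac r2x-\tfrac{\sigma_F^2}{2x}\bigr)\diff t+\sigma_F\diff W^{\Q}$, a dimension-$0$ Bessel drift pushing $x$ into the boundary.

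\textbf{Consequences for your last step of Part 1 and for Part 2.} The claim of ``no boundary mass'' also fails. $\kappa P_T$ is non-central $\chi^2$ with $0$ degrees of freedom and non-centrality $c$, which carries an atom $\ee^{-c/2}$ at zero. Formula \eqref{eq:call} is still correct, but as the Cox--Schroder price \emph{with absorption at zero}. A Poisson-mixture check gives $\chi^2(c;2,a)=\Q(P_T>K_{\mathrm{str}})$, so the atom is already built in. The correct justification is that an exit boundary admits no boundary specification other than absorption, not that zero is unreachable.

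In Part 2 your asymptotics are right, but the planned verification will find the ratio condition doing no work. Since $\exp\bigl(-2\mu_F\sqrt{k}\,\sqrt{P}/\sigma_F^2\bigr)\to1$ and $m'(P)\sim\tfrac{k}{2\sigma_F^2}P^{-1/2}$, both Feller integrals are finite for every $\mu_F\in\R$. Zero is therefore a regular, always-attainable boundary, exactly as your reserve picture ($x$ a drifted Brownian motion, $P=x^2/k$) says. The condition $2\mu_F/(\sigma_F^2\sqrt k)<1$, which is not even dimensionless, can at most be kept as a redundant sufficient condition.
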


\begin{remark}[Leverage effect]\label{rem:leverage}
The CEV structure with $\beta < 1$ generates a negative correlation between the price level and return volatility: as $P$ falls, $\sigma_{\mathrm{ret}}(P) = \delta P^{\beta - 1}$ rises. In equity markets, this ``leverage effect'' is typically attributed to increased financial leverage as firm value declines \citep{black1973pricing}. For AMM tokens, the mechanism is purely structural. When the alpha price falls, the TAO reserve $x = \sqrt{kP}$ decreases, making the pool shallower in TAO terms. The same dollar-equivalent staking flow then moves the price proportionally more. The AMM bonding curve thus provides a first-principles derivation of the leverage effect, grounded in market microstructure rather than capital structure.
\end{remark}

\section{Option Pricing}\label{sec:pricing}

\subsection{Risk-Neutral Dynamics}\label{sec:risk_neutral}

Under the risk-neutral measure $\Q$, the CEV price process becomes
\begin{equation}\label{eq:cev_Q}
    \diff P = r P \diff t + \delta P^{\beta} \diff W^{\Q}(t),
\end{equation}
where $r$ is the risk-free rate and $W^{\Q}$ is a $\Q$-Brownian motion. The change of measure from $\mathbb{P}$ to $\Q$ is effected via the Girsanov kernel $\theta(P) = [\mu(P) - rP]/(\delta P^{\beta})$, which is bounded whenever $P$ is bounded away from zero. A standard localization argument establishes $\Q$: define stopping times $\tau_n = \inf\{t : P(t) < 1/n\}$ and apply Girsanov's theorem on each $[0, \tau_n \wedge T]$, where $\theta$ is bounded. Since $P(0) > 0$ and zero is inaccessible under the resulting risk-neutral dynamics (\Cref{prop:boundary}), $\tau_n \to \infty$ a.s.\ under $\Q$, yielding the global measure change. The existence of the equivalent martingale measure for CEV processes with $\beta \in (0,1)$ is established rigorously by \citet{davydov2003pricing} (see their \S3).

The risk-neutral pricing argument requires approximate replicability of contingent claims by dynamic trading in the underlying. For AMM tokens, this is imperfect due to slippage: a trade of size $\Delta x$ incurs a price impact of order $\Delta x / x$. When individual hedge trades are small relative to the TAO reserve (i.e., $|\Delta x| \ll x$), the slippage cost is second-order and the replication error is bounded. We quantify this error in \Cref{sec:hedging_error} and show it scales as $k^{-2}$, becoming negligible for deep pools.

\subsection{European Option Pricing Formula}\label{sec:cev_formula}

The European call price under the CEV model was derived by \citet{cox1975notes} and refined by \citet{schroder1989computing}. For $\beta < 1$ (which includes the constant-product case $\beta = 1/2$), zero is inaccessible under the risk-neutral measure (\Cref{prop:boundary}), so the non-central chi-squared representation is well-defined and no boundary correction is required. The price of a European call with strike $K_{\text{str}}$ and maturity $T$ is:

\begin{theorem}[CEV Call Price]\label{thm:call}
\begin{equation}\label{eq:call}
    C(P, K_{\mathrm{str}}, T) = P \left[1 - \chi^2\!\left(a;\, b + 2,\, c\right)\right] - K_{\mathrm{str}} \ee^{-rT} \chi^2\!\left(c;\, b,\, a\right),
\end{equation}
where $\chi^2(x;\, n,\, \lambda)$ denotes the cumulative distribution function of the non-central chi-squared distribution with $n$ degrees of freedom and non-centrality parameter $\lambda$, and
\begin{align}
    \kappa &= \frac{2r}{\delta^2(1-\beta)(e^{2r(1-\beta)T} - 1)}, \label{eq:kappa} \\
    c &= \kappa P^{2(1-\beta)} e^{2r(1-\beta)T}, \label{eq:c} \\
    a &= \kappa K_{\mathrm{str}}^{2(1-\beta)}, \label{eq:a} \\
    b &= \frac{1}{1-\beta}. \label{eq:b}
\end{align}
\end{theorem}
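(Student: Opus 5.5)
I will prove \eqref{eq:call} by computing $C = \ee^{-rT}\E^{\Q}[(P_T - K_{\mathrm{str}})^+]$ directly from the transition law of \eqref{eq:cev_Q}, reducing the CEV diffusion to a squared Bessel process. The first step is the change of variables $X_t = \kappa_t P_t^{2(1-\beta)}$ with a deterministic time-dependent scaling $\kappa_t$. By It\^{o}'s lemma, $Y = P^{2(1-\beta)}$ satisfies
\[
\diff Y = \bigl[2(1-\beta) r Y + (1-\beta)(1-2\beta)\delta^2\bigr]\diff t + 2(1-\beta)\delta \sqrt{Y}\,\diff W^{\Q}.
\]
This is a CIR-type (affine, square-root) diffusion with zero mean reversion level and linear drift coefficient $2(1-\beta)r$. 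Removing the exponential growth by setting $Z_t = \ee^{-2r(1-\beta)t}Y_t$ and then changing time yields a squared Bessel process of dimension $\nu = (1-2\beta)/(1-\beta)\cdot$(normalization). This route gives the known result that $2\kappa P_T^{2(1-\beta)}$, conditional on $P_0=P$, is noncentral chi-squared. Because $\nu<2$ for $\beta\in(0,1)$, I will take the killed (absorbed-at-zero) transition density. This is consistent with \Cref{prop:boundary}, and it explains why the degrees of freedom $2-\nu = b$ appear with the roles of the arguments swapped, following \citet{schroder1989computing}. The scaling constants fall out as $\kappa$ in \eqref{eq:kappa}, with $c$ the (scaled) initial state and $a$ the scaled strike.

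Second, I write the payoff expectation as two integrals over $\{P_T > K_{\mathrm{str}}\}$, that is over $\{X_T > a\}$ in the transformed variable:
\[
C = \ee^{-rT}\E^{\Q}\bigl[P_T \mathbf 1_{\{P_T>K_{\mathrm{str}}\}}\bigr] - K_{\mathrm{str}}\ee^{-rT}\,\Q(P_T>K_{\mathrm{str}}).
\]
For the strike term I use the killed density, which is a Bessel-$I_{b-1}$ kernel in $(x,c)$. I then apply the symmetry identity of Schroder relating $1-\chi^2(y;\nu,\lambda)$ to $\chi^2(\lambda; 2-\nu, y)$, obtained by expanding the density as a Poisson mixture and interchanging sums. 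This converts $\Q(X_T>a)$ into $\chi^2(c;b,a)$. For the first term I use $P_T$ as a change of numeraire. Since $\ee^{-rt}P_t$ is a true $\Q$-martingale (this needs justification, because a CEV process with $\beta<1$ has no explosion or bubble), the measure $\diff\hat\Q/\diff\Q = \ee^{-rT}P_T/P$ is well defined. Under $\hat\Q$, $X$ is again a squared Bessel process, now of dimension raised by $2$. The $\hat\Q$-probability of exercise is then $1-\chi^2(a;b+2,c)$.

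The main obstacle is the boundary bookkeeping in the first step. The process hits zero with positive probability when $\nu\in(0,2)$, so I must be careful about whether the absorbed or the reflected transition is intended; the inaccessibility claim of \Cref{prop:boundary} needs to be checked against this. The absorbed choice is what produces the complementary-CDF form $\chi^2(c;b,a)$. I would verify the identity through the Poisson-mixture series, and as a sanity check I would confirm put--call parity and the limit $\beta\to1$, which should recover Black--Scholes.
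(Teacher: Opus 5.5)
The paper gives no derivation of \eqref{eq:call}. It attributes the formula to Cox and Schroder and dismisses boundary effects by citing \Cref{prop:boundary}. Your architecture is the standard derivation and the right one: a time-changed squared Bessel reduction, the killed density plus Schroder's duality for the strike term, and a share-measure change of numeraire for the asset term. However, two steps are wrong as written.

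\emph{The numeraire step.} The dimension is not ``raised by 2''. Under $\hat\Q$ you have $\diff W^{\Q}=\diff W^{\hat\Q}+\delta P^{\beta-1}\diff t$, so $Y=P^{2(1-\beta)}$ gains the extra drift $2(1-\beta)\delta^2$. Its dimension therefore becomes $(3-2\beta)/(1-\beta)=b+2=4-\nu$, not $\nu+2=4-b$. Equivalently, with your $Z_t=\ee^{-2r(1-\beta)t}Y_t$ one has exactly $\ee^{-rt}P_t=Z_t^{b/2}$, and $z^{b/2}=z^{1-\nu/2}$ is the scale function of BESQ$^{\nu}$. The share-measure change is thus the Doob $h$-transform by the scale function, which sends $\nu\mapsto 4-\nu$ (at $\beta=1/2$, $0\mapsto4$). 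Your stated exercise probability $1-\chi^2(a;b+2,c)$ holds only for this reason. Taking ``raised by 2'' literally gives $4-b$ degrees of freedom, i.e.\ $2$ instead of $4$ at $\beta=1/2$.

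\emph{The boundary.} The dimension is exactly $\nu=(1-2\beta)/(1-\beta)=2-b$, with no extra normalisation. So $\nu=0$ at $\beta=1/2$ and $\nu<0$ on $(1/2,1)$. Zero is then an exit boundary, absorption is forced by pathwise uniqueness (there is no reflected alternative), and at $\beta=1/2$ one has $\Q(P_T=0)=\ee^{-c/2}>0$. This contradicts \Cref{prop:boundary} rather than agreeing with it. The scale density of $\diff P=rP\,\diff t+\delta\sqrt{P}\,\diff W^{\Q}$ is $\exp(-2ru/\delta^2)$, so $s(0+)$ is finite and zero is reached. Your planned check of that proposition will therefore fail, and you should state the theorem as the price under absorption at zero. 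This costs nothing: the atom affects neither term when $K_{\mathrm{str}}>0$, $\ee^{-rt}P_t$ is still a true martingale, and $\hat\Q$ is only absolutely continuous with respect to $\Q$, which is all the numeraire step needs.

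\emph{Smaller slips in the same passage.}
\begin{itemize}
\item The scaled variable is $\kappa P_T^{2(1-\beta)}$, not $2\kappa P_T^{2(1-\beta)}$, because the paper's $\kappa$ already contains Schroder's factor $2$. Its $\Q$-law is not a noncentral $\chi^2$ except, with zero degrees of freedom, at $\beta=1/2$.
\item The killed kernel has Bessel order $b/2$, not $b-1$; the two agree only at $\beta=1/2$.
\item The duality must be stated for the killed law. For an ordinary noncentral $\chi^2$ with $\nu\in(0,2)$, the identity $1-\chi^2(y;\nu,\lambda)=\chi^2(\lambda;2-\nu,y)$ is false (let $\lambda\to0$). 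The clean version is this: in scaled units the killed density from $c$ to $x$ equals $f(c;b+2,x)$, where $f(\cdot;n,\lambda)$ is the noncentral $\chi^2$ density. Then $\partial_\lambda\chi^2(c;b,\lambda)=-f(c;b+2,\lambda)$ gives $\Q(X_T>a)=\int_a^\infty f(c;b+2,x)\,\diff x=\chi^2(c;b,a)$.
\end{itemize}
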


For the constant-product AMM ($\beta = 1/2$), these simplify to $b = 2$ and
\begin{equation}\label{eq:kappa_cp}
    \kappa = \frac{2r}{\delta^2 (e^{rT} - 1) / 2} = \frac{4r}{\delta^2(e^{rT} - 1)}, \quad c = \kappa P e^{rT}, \quad a = \kappa K_{\mathrm{str}}.
\end{equation}

The European put price follows from put-call parity:
\begin{equation}\label{eq:put}
    \Pi(P, K_{\mathrm{str}}, T) = C(P, K_{\mathrm{str}}, T) - P + K_{\mathrm{str}} \ee^{-rT}.
\end{equation}

\begin{remark}[Convergence to Black--Scholes]
As $\beta \to 1$ with $\delta P_0^{\beta - 1} = \sigma$ held fixed, the CEV call price \eqref{eq:call} converges to the Black--Scholes formula
\begin{equation}
    C_{\mathrm{BS}} = P\, \Phi(d_1) - K_{\mathrm{str}} \ee^{-rT} \Phi(d_2),
\end{equation}
where $d_{1,2} = \frac{\ln(P/K_{\mathrm{str}}) + (r \pm \sigma^2/2)T}{\sigma\sqrt{T}}$. The CEV formula thus formally justifies using Black--Scholes when AMM liquidity is large.
\end{remark}

\subsection{The Liquidity-Adjusted Black--Scholes Formula}\label{sec:liquidity_adjusted}

To make the connection to Black--Scholes explicit, we decompose the CEV call price as
\begin{equation}\label{eq:decompose}
    C_{\mathrm{CEV}} = C_{\mathrm{BS}}(\sigma_{\mathrm{eff}}) + \Lambda_C,
\end{equation}
where $\sigma_{\mathrm{eff}} = \delta P^{w-1}$ is the effective volatility at the current price, and $\Lambda_C$ is the \emph{liquidity correction}, i.e., the residual difference due to the price-dependent volatility structure.

\begin{proposition}[Liquidity correction]\label{prop:correction}
The liquidity correction $\Lambda_C = C_{\mathrm{CEV}} - C_{\mathrm{BS}}(\sigma_{\mathrm{eff}})$ at the money satisfies $\Lambda_C = O(\delta^2)$ as $\delta \to 0$ (equivalently, $k \to \infty$). For general moneyness, $\Lambda_C$ is positive for in-the-money calls (and out-of-the-money puts) and negative for out-of-the-money calls (and in-the-money puts), consistent with the negative skew generated by $\beta < 1$. At the money, the correction is positive but very small: the CEV call price slightly exceeds Black--Scholes.
\end{proposition}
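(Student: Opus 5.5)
We prove the liquidity-correction statement by small-noise (singular) perturbation of the pricing PDE around Black--Scholes, using the local volatility $\sigma(P)=\delta P^{\beta-1}$ from \Cref{prop:vol}. Write $C_{\mathrm{CEV}}$ as the solution of $\partial_t C + rP\partial_P C + \tfrac12 \sigma(P)^2 P^2 \partial_{PP} C - rC = 0$. Fix the spot $P_0$, set $\sigma_0=\sigma_{\mathrm{eff}}=\delta P_0^{\beta-1}$, and expand the local variance around $P_0$ in log-moneyness $y=\ln(P/P_0)$:
\[
\sigma(P)^2=\sigma_0^2\, e^{2(\beta-1)y}=\sigma_0^2\bigl(1+2(\beta-1)y+O(y^2)\bigr).
\]
The difference $\Lambda_C=C_{\mathrm{CEV}}-C_{\mathrm{BS}}(\sigma_0)$ then solves the same PDE with source term $\tfrac12(\sigma(P)^2-\sigma_0^2)P^2\partial_{PP}C_{\mathrm{BS}}$ and zero terminal data. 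By Feynman--Kac it equals the discounted expectation of this source along the CEV path.

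\emph{Step 1 (at the money, order $\delta^2$).} Over $[0,T]$ the path satisfies $y=O_p(\sigma_0\sqrt{t})$. The first-order term $2(\beta-1)\sigma_0^2 y\,P^2\Gamma_{\mathrm{BS}}$ therefore has expectation only through the drift of $y$, which is $O(\sigma_0^2 t)$, and through the correlation of $y$ with $\Gamma$, which is odd in $y$ at the money. So the leading contribution is $\sigma_0^2\cdot O(\sigma_0^2)$ relative to the vega scale. Since $C_{\mathrm{BS}}$ at the money is itself $O(\sigma_0)$ and $\sigma_0\propto\delta$, this yields $\Lambda_C=O(\delta^2)$ (indeed smaller), uniformly for fixed $T$ and $r$. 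For a cleaner route, I would instead invoke the standard Hagan--Woodward local-volatility asymptotic: implied volatility $\approx \sigma(\bar P)$ with $\bar P=\sqrt{P_0K_{\mathrm{str}}}$, plus an $O(\sigma_0^3)$ correction. At $K_{\mathrm{str}}=P_0$ this gives $\sigma_{\mathrm{IV}}=\sigma_0\bigl(1+\tfrac{(1-\beta)^2}{24}\sigma_0^2T+\dots\bigr)$. Multiplying the excess by vega, which is $O(1)$ here, gives $\Lambda_C=O(\delta^3)\subset O(\delta^2)$, and the correction is positive because $(1-\beta)^2>0$. This also delivers the claim that the ATM correction is \emph{positive but very small}.

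\emph{Step 2 (sign across moneyness).} The same asymptotic gives $\sigma_{\mathrm{IV}}(K_{\mathrm{str}})\approx\delta\,(P_0K_{\mathrm{str}})^{(\beta-1)/2}$. Since $\beta<1$, this is strictly decreasing in $K_{\mathrm{str}}$, which matches the negative skew of \Cref{prop:skew}. For $K_{\mathrm{str}}<P_0$ (ITM call) we get $\sigma_{\mathrm{IV}}>\sigma_0$, so by monotonicity of $C_{\mathrm{BS}}$ in $\sigma$ (positive vega) $\Lambda_C>0$. For $K_{\mathrm{str}}>P_0$ the leading term makes $\sigma_{\mathrm{IV}}<\sigma_0$ and hence $\Lambda_C<0$. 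Put--call parity \eqref{eq:put} holds identically in both models, so the put corrections equal the call corrections. This transfers the signs to OTM puts (positive) and ITM puts (negative).

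\emph{Main obstacle.} The hard part is making Step 2 rigorous away from the small-$\delta$ regime and near the money. There the leading skew term $\propto\ln(K_{\mathrm{str}}/P_0)$ competes with the positive $O(\sigma_0^3)$ ATM curvature term. For strikes just above the money, the sign switches only after $\ln(K_{\mathrm{str}}/P_0)$ exceeds roughly $\sigma_0^2T/12$. I would therefore state the sign claim for moneyness bounded away from a shrinking neighborhood of ATM, or asymptotically as $\delta\to0$. For finite $\delta$, I would supplement this with a comparison argument. The function $\sigma(P)^2-\sigma_0^2$ has the sign of $P_0-P$, and the Black--Scholes gamma is concentrated near the strike. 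Weighting the Feynman--Kac source by where the gamma mass lies then gives the sign for strikes clearly on either side, and the closed form \eqref{eq:call} can be used to check the ATM positivity numerically.
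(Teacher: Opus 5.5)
Your argument is correct in substance, with the caveats you already flag, but it follows a genuinely different route from the paper's.

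\textbf{The paper's route.} The paper works along the $\beta$-axis. It views $C$ as a function of $\beta$ with $\sigma_{\mathrm{eff}}$ held fixed. It uses the $O(1/b)$ convergence of the noncentral $\chi^2$ law to a normal as $b=1/(1-\beta)\to\infty$ to get $C(\beta)-C(1)=O(1-\beta)$, and converts this to $O(\delta^2)$ by asserting $\delta\propto(1-\beta)^{1/2}$. The ATM sign comes from a Jensen-type claim that the CEV terminal variance exceeds the GBM one. The OTM sign is obtained only ``for sufficiently high strikes'' from the skew.

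\textbf{What your route buys.} You stay at fixed $\beta$ and expand in small noise, which is the limit the statement is actually about.
\begin{itemize}
\item With $\sigma_{\mathrm{eff}}$ fixed one has $\delta=\sigma_{\mathrm{eff}}P^{1-\beta}$, not $\delta\propto(1-\beta)^{1/2}$. So a rate in $1-\beta$ does not translate into a rate in $\delta$. Your expansion gives $\Lambda_C=O(\delta^3)$ directly (for $r=0$), with an explicit positive coefficient.
\item Your curvature term $\tfrac{(1-\beta)^2}{24}\sigma_0^2T$ is also the right mechanism for ATM positivity. For $\beta=1/2$, $r=0$ the variance ordering actually runs the other way: $\mathrm{Var}(P_T)=\delta^2P_0T$ under CEV versus $P_0^2(e^{\sigma_0^2T}-1)$ under GBM. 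The Bachelier case $\beta=0$ confirms your constant: there the ATM price $\delta\sqrt{T/(2\pi)}$ exceeds $C_{\mathrm{BS}}(\sigma_0)\approx\delta\sqrt{T/(2\pi)}\,(1-\sigma_0^2T/24)$.
\item Your ``main obstacle'' is a feature of the statement, not of your method. $\Lambda_C$ is continuous in the strike and positive at $K_{\mathrm{str}}=P_0$, hence positive for slightly out-of-the-money calls. So the OTM sign can only hold away from the money, which is all the paper's proof argues.
\end{itemize}
The paper's route has the merit of staying inside the closed form \eqref{eq:call} and linking the correction to the Black--Scholes limit $\beta\to1$. Yours is the one that actually controls $\delta\to0$ at $\beta=1/2$.

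\textbf{Points to tighten.}
\begin{itemize}
\item Your expansion is the driftless-forward one. With the paper's $r>0$ and ATM meaning $K_{\mathrm{str}}=P_0$, apply it to $F_t=P_te^{r(T-t)}$, whose local volatility $\delta e^{r(1-\beta)(T-t)}F^{\beta-1}$ is time-dependent.
\item In that setting the strike sits below the forward, so vega at $K_{\mathrm{str}}=P_0$ is exponentially small as $\delta\to0$, not $O(1)$. The conclusions survive: $\Lambda_C$ is still positive, and far smaller than $O(\delta^2)$.
\item However, the exceptional window of out-of-the-money strikes above $P_0$ no longer shrinks to zero with $\delta$. It keeps a fixed width of order $(1-\beta)(rT)^2$, negligible in practice.
\item For $r=0$, the crossover from your own expansion is at $\ln(K_{\mathrm{str}}/P_0)\approx(1-\beta)\sigma_0^2T/12$, i.e.\ $\sigma_0^2T/24$ for $\beta=1/2$, not $\sigma_0^2T/12$.
\item In Step 2 (again for $r=0$) you can replace the geometric-mean approximation by the exact leading-order Berestycki--Busca--Florent formula, $\sigma^{(0)}_{\mathrm{IV}}(K_{\mathrm{str}})=\delta(1-\beta)\ln(K_{\mathrm{str}}/P_0)/(K_{\mathrm{str}}^{1-\beta}-P_0^{1-\beta})$. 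As a harmonic average of the strictly decreasing $\sigma(P)$ between $P_0$ and $K_{\mathrm{str}}$, it lies strictly above $\sigma_0$ for $K_{\mathrm{str}}<P_0$ and strictly below for $K_{\mathrm{str}}>P_0$. This gives both signs for each fixed strike away from the money once $\delta$ is small, without further approximation.
\end{itemize}
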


The magnitude of the liquidity correction scales with $\delta^2 \propto K^{-2}$, confirming that it vanishes rapidly as pool depth increases.

\subsection{Liquidity-Adjusted Greeks}\label{sec:greeks}

The standard option Greeks are modified by the CEV structure. We also introduce two new sensitivities specific to AMM tokens.

\begin{definition}[AMM Greeks]\label{def:greeks}
The \emph{CEV delta} and \emph{CEV gamma} of a European call are
\begin{align}
    \Delta_{\mathrm{CEV}} &= \frac{\partial C}{\partial P} = 1 - \chi^2(a;\, b+2,\, c) + P \frac{\partial}{\partial P}\left[1 - \chi^2(a;\, b+2,\, c)\right] - K_{\mathrm{str}} e^{-rT} \frac{\partial}{\partial P} \chi^2(c;\, b,\, a), \label{eq:delta_cev} \\
    \Gamma_{\mathrm{CEV}} &= \frac{\partial^2 C}{\partial P^2}. \label{eq:gamma_cev}
\end{align}
The \emph{liquidity Greek} is
\begin{equation}\label{eq:lambda}
    \Lambda = \frac{\partial C}{\partial k} = \frac{\partial C}{\partial \delta} \cdot \frac{\partial \delta}{\partial k},
\end{equation}
measuring the sensitivity of the option price to changes in pool depth. For the constant-product AMM, $\delta = 2\sigma_F / \sqrt{k}$, so $\partial \delta / \partial k = -\sigma_F / k^{3/2}$, and $\Lambda < 0$: deeper pools reduce option value by compressing volatility.

The \emph{emission Greek} is
\begin{equation}\label{eq:emission_greek}
    \mathcal{E} = \frac{\partial C}{\partial e},
\end{equation}
measuring sensitivity to the emission rate $e$ that governs the growth of $k$ over time (see \Cref{sec:emissions}). Using the integrated variance \eqref{eq:int_var}, $\mathcal{E}$ can be computed via the chain rule: $\mathcal{E} = (\partial C / \partial \bar{v}^2)(\partial \bar{v}^2 / \partial \dot{k})(\partial \dot{k} / \partial e)$. Since $\partial \bar{v}^2 / \partial \dot{k} < 0$ and $\partial C / \partial \bar{v}^2 > 0$ (option prices increase with variance), the emission Greek is negative: higher emissions reduce option value by deepening the pool over the option's lifetime.
\end{definition}

\subsection{Hedging Error from AMM Friction}\label{sec:hedging_error}

Delta-hedging an option on an AMM token requires trading through the AMM, incurring slippage. For a hedge trade of $\Delta P$ units of alpha, the slippage cost in a constant-product AMM is approximately
\begin{equation}\label{eq:slippage}
    S(\Delta P) \approx \frac{(\Delta P)^2}{2k / P^2} = \frac{P^2 (\Delta P)^2}{2k}.
\end{equation}
Over a hedging interval $\Delta t$ with rebalancing, the cumulative expected hedging cost is
\begin{equation}\label{eq:hedge_cost}
    \E\left[\int_0^T S(\diff \Delta)\right] \approx \frac{P^2 \Gamma^2 \sigma_{\mathrm{ret}}^2}{2k} \cdot T,
\end{equation}
which scales as $k^{-2}$, an additional friction cost beyond the standard model. This cost should be added to the option price as a \emph{replication premium}.

\begin{proposition}[Replication premium]\label{prop:replication}
The replication premium for a European call on a constant-product AMM token is bounded by
\begin{equation}\label{eq:rep_premium}
    R \leq \frac{\delta^2 P^{2\beta}}{2k} \cdot \E^{\Q}\!\left[\int_0^T \Gamma_{\mathrm{CEV}}^2(t) P(t)^2 \diff t\right].
\end{equation}
As $k \to \infty$, $R \to 0$ and exact replication is recovered.
\end{proposition}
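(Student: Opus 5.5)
The plan is to bound the total slippage cost of the discrete delta-hedging strategy in the limit of continuous rebalancing, and to identify $R$ with the expected present value of that cost under $\Q$.

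\textbf{Step 1 (one rebalancing interval).} Fix a partition $0=t_0<\dots<t_N=T$ with mesh $\Delta t$. Over $[t_i,t_{i+1}]$ the hedger changes the alpha holding by
\[
\Delta_{\mathrm{CEV}}(t_{i+1})-\Delta_{\mathrm{CEV}}(t_i)\approx \Gamma_{\mathrm{CEV}}(t_i)\,\diff P .
\]
By \eqref{eq:slippage}, the slippage of that trade is at most
\[
\frac{P^2}{2k}\,\Gamma_{\mathrm{CEV}}^2\,(\diff P)^2\,(1+o(1)).
\]
The $o(1)$ term comes from the Taylor remainder of the exact constant-product swap \eqref{eq:swap}. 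Since $|\Delta x|/x\to0$, that remainder is of lower order, and the exact slippage is dominated by its quadratic approximation up to a factor tending to one.

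\textbf{Step 2 (quadratic variation).} Using \eqref{eq:cev_Q}, I replace $(\diff P)^2$ by its quadratic variation $\delta^2P^{2\beta}\,\diff t$. Summing over intervals and letting $\Delta t\to0$ bounds the total cost by
\[
\frac{\delta^2}{2k}\int_0^T P(t)^{2\beta}\,P(t)^2\,\Gamma_{\mathrm{CEV}}^2(t)\,\diff t .
\]
Discounting by $e^{-rt}\le1$ and taking $\E^{\Q}$ then gives an upper bound on $R$.

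\textbf{Step 3 (matching the stated form).} The displayed bound carries $P^{2\beta}$ outside the expectation, evaluated at the initial price. To obtain it I would freeze the local variance at its time-zero value $\delta^2P_0^{2\beta}$, which is the same effective-volatility approximation used in \Cref{cor:bs_limit}. This freezing is the step I expect to be the main obstacle, because it gives a clean inequality only up to an approximation, not exactly. Two repairs are available:
\begin{enumerate}
\item Localize to paths where $P(t)\le P_0(1+\epsilon)$ and control the excursion probability using the concentration of $P$ as $\delta\to0$.
\item State the result with $P(t)^{2\beta}$ inside the expectation.
\end{enumerate}
I would note that the two forms agree to leading order in $\delta$.

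\textbf{Step 4 (integrability and the limit).} I need to show that
\[
\E^{\Q}\!\left[\int_0^T\Gamma_{\mathrm{CEV}}^2P^2\,\diff t\right]<\infty .
\]
$\Gamma_{\mathrm{CEV}}$ blows up near maturity, of order $(T-t)^{-1/2}$ near the strike, so $\Gamma^2$ is of order $(T-t)^{-1}$. That singularity is integrable only after averaging over $P(t)$: the transition density near the strike gives an extra factor $(T-t)^{1/2}$. I would verify this using the noncentral chi-squared density from \Cref{thm:call}. Inaccessibility of zero (\Cref{prop:boundary}) ensures $P$ stays positive, and $P$ has finite moments under $\Q$.

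With the expectation finite, the prefactor $\delta^2/(2k)$ is of order $\sigma_F^2/k^2$, and the expectation stays bounded as $k\to\infty$, since $P^2\Gamma^2$ behaves like its Black--Scholes counterpart with small volatility. The bound therefore tends to zero, which gives $R\to0$.
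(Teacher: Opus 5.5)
Your Steps 1--2 are essentially the paper's own two-line proof: slippage \eqref{eq:slippage} at trade size $\Gamma_{\mathrm{CEV}}\Delta P$, then $(\Delta P)^2\approx\delta^2P^{2\beta}\Delta t$, integrate, and take $\E^{\Q}$. Your Step~3 caveat is accurate, since the paper moves $P^{2\beta}$ outside the expectation without comment.

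\textbf{The gap is in Step~4.} Finiteness for fixed $k$ is fine. But you then assert that $\E^{\Q}\bigl[\int_0^T\Gamma_{\mathrm{CEV}}^2P^2\,\diff t\bigr]$ stays bounded as $k\to\infty$, and it does not, at least for strikes at or near the forward.

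Your integrability check tracks only the $(T-t)$-dependence of gamma and drops its dependence on volatility. For small $\delta$ the CEV gamma is close to the Black--Scholes gamma at $\sigma_{\mathrm{eff}}=\delta P_0^{\beta-1}$. So $P\,\Gamma_{\mathrm{CEV}}\approx\phi(d_1)/(\sigma_{\mathrm{eff}}\sqrt{T-t})$, with $\phi$ the standard normal density, and this blows up as $\sigma_{\mathrm{eff}}\to0$.

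Take a call struck at the forward $P_0e^{rT}$. Averaging over $P(t)$ gives $\E^{\Q}[\Gamma_{\mathrm{CEV}}^2P^2](t)\approx 1/\bigl(2\pi\sigma_{\mathrm{eff}}^2\sqrt{(T-t)(T+t)}\bigr)$. Hence $\E^{\Q}\bigl[\int_0^T\Gamma_{\mathrm{CEV}}^2P^2\,\diff t\bigr]\approx 1/(4\sigma_{\mathrm{eff}}^2)$, which for $\beta=1/2$ equals $P_0k/(16\sigma_F^2)$ and grows linearly in $k$.

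The right-hand side of \eqref{eq:rep_premium} is then about $P_0^2/(8k)$. It still tends to zero, but at rate $k^{-1}$, not $\sigma_F^2/k^2$. The same oversight underlies the paper's $k^{-2}$ scaling claim.

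The repair is to keep $\delta^2P^{2\beta}$ attached to $\Gamma_{\mathrm{CEV}}^2$, since $\Gamma_{\mathrm{CEV}}^2\delta^2P^{2\beta}\,\diff t=\diff\langle\Delta_{\mathrm{CEV}}\rangle_t$.
\begin{itemize}
\item $\Delta_{\mathrm{CEV}}\in[0,1]$ and its $\Q$-drift is $-\beta\delta^2P^{2\beta-1}\Gamma_{\mathrm{CEV}}$.
\item It\^{o}'s formula for $\Delta_{\mathrm{CEV}}^2$ then gives $\E^{\Q}\langle\Delta_{\mathrm{CEV}}\rangle_T\le 1+O(\delta)$; the limit is $1/4$ at the forward.
\item With $P$ frozen at $P_0$ as in your Step~3, the bound is at most about $P_0^2/(2k)$. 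This is what actually gives $R\to0$.
\end{itemize}

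\textbf{A secondary error is the justification in Step~1.} \eqref{eq:slippage} is not the quadratic Taylor term of the exact swap \eqref{eq:swap}. Buying $n$ alpha costs $kn/(y(y-n))=Pn+Pn^2/y+O(n^3)$. So the true leading slippage is $Pn^2/y=P^{3/2}n^2/\sqrt{k}$. This is larger than $P^2n^2/(2k)$ by the factor $2y=2\sqrt{k/P}$, about $8\times10^4$ in Example~\ref{ex:simple_amm}, not by $1+o(1)$.

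Step~1 can therefore only take \eqref{eq:slippage} as a stipulated friction model, which is all the paper's proof does. If $R$ is meant to be the cost of trading through the actual pool, this route does not deliver \eqref{eq:rep_premium}.
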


\begin{remark}[Replication--relevance tension]
A conceptual tension arises: the CEV correction to Black--Scholes is largest for shallow pools (small $k$), but the replication premium is also largest for small $k$ since it scales as $k^{-2}$. For the shallowest pool in our sample (SN58, $k = 7.4 \times 10^9$), the bound \eqref{eq:rep_premium} evaluates to less than $10^{-6}\%$ of the option price, confirming that the CEV-BS pricing discrepancy (of order $1$--$6\%$ of implied volatility) dominates the replication friction by many orders of magnitude. The $k^{-2}$ scaling of the friction versus the $k^{-1}$ scaling of the pricing discrepancy means the replication argument holds in precisely the regime where the CEV correction matters. For very shallow pools not represented in our data (e.g., $k < 10^5$), the replication premium could become material, and CEV prices should be interpreted as fair value benchmarks rather than strict arbitrage-free prices.
\end{remark}

\subsection{Extension: Token Emissions}\label{sec:emissions}

When emissions inject TAO and alpha into the pool at rates $\eTAO$ and $e_\alpha$ per unit time, the pool invariant grows deterministically:
\begin{equation}\label{eq:k_growth}
    \frac{\diff k}{\diff t} = y(t)\, \eTAO + x(t)\, e_\alpha.
\end{equation}
At equilibrium with $P$ near $P_0$, reserves grow approximately in proportion ($\Delta x / x \approx \Delta y / y$), maintaining roughly constant price while deepening the pool. For option horizons short relative to the emission timescale ($T \ll k_0 / \dot{k}$), this simplifies to
\begin{equation}\label{eq:k_growth_approx}
    k(t) \approx k_0 + (y_0\, \eTAO + x_0\, e_\alpha)\, t \equiv k_0 + \dot{k} \cdot t.
\end{equation}
The CEV volatility parameter becomes time-dependent:
\begin{equation}\label{eq:delta_t}
    \delta(t) = \frac{2\sigma_F}{\sqrt{k(t)}} = \frac{2\sigma_F}{\sqrt{k_0 + \dot{k} t}}.
\end{equation}

\begin{proposition}[Option pricing with emissions]\label{prop:emissions}
Under deterministically time-varying $\delta(t)$ and the assumption that $k(t)$ evolves slowly relative to the option horizon (so that the time-change from calendar time to ``variance time'' is approximately deterministic), the CEV call price formula \eqref{eq:call} remains valid with $\delta^2 T$ replaced by the integrated variance:
\begin{equation}\label{eq:int_var}
    \bar{v}^2 = \int_0^T \delta(t)^2 \diff t = 4\sigma_F^2 \int_0^T \frac{\diff t}{k_0 + \dot{k} t} = \frac{4\sigma_F^2}{\dot{k}} \ln\!\left(1 + \frac{\dot{k} T}{k_0}\right).
\end{equation}
As $\dot{k} \to 0$ (no emissions), $\bar{v}^2 \to 4\sigma_F^2 T / k_0 = \delta_0^2 T$, recovering the constant case.
\end{proposition}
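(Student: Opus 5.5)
The plan is to reduce the time-varying problem to the constant-coefficient CEV problem of \Cref{thm:call} through a deterministic time change. Under $\Q$ with $\beta=1/2$ and deterministic $\delta(t)$, the dynamics are $\diff P = rP\,\diff t + \delta(t)\sqrt{P}\,\diff W^{\Q}$. The first step removes the drift. Set $X(t)=e^{-rt}P(t)$, so that $\diff X = \delta(t)e^{-rt/2}\sqrt{X}\,\diff W^{\Q}$. This is a driftless squared-Bessel-type diffusion whose only time dependence is the scalar factor $\delta(t)^2 e^{-rt}$ multiplying the quadratic variation.

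The second step applies the Dambis--Dubins--Schwarz time change. Define the clock $\tau(t)=\int_0^t \delta(s)^2 e^{-rs}\,\diff s$. Then $\tilde X(\tau(t)) = X(t)$ satisfies $\diff \tilde X = \sqrt{\tilde X}\,\diff \tilde W$, a constant-coefficient square-root process. Its transition law is the scaled non-central $\chi^2$ with $b=2$ degrees of freedom used in \Cref{thm:call}.

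It follows that the law of $P(T)$ depends on $\delta(\cdot)$ only through $\tau(T)$. Writing the constant-$\delta$ case the same way gives $\tau(T)=\delta^2(1-e^{-rT})/r$. Matching the two expressions shows that \eqref{eq:call} holds exactly with the effective constant
\[
\delta_{\mathrm{eff}}^2 = \frac{r}{1-e^{-rT}}\int_0^T \delta(t)^2 e^{-rt}\,\diff t .
\]
Equivalently, $\kappa$ in \eqref{eq:kappa_cp} is replaced by $\kappa = 2e^{-rT}/\tau(T)\cdot e^{rT}\cdots$, with the algebra routine. The zero-boundary analysis of \Cref{prop:boundary} carries over, because the time change is a deterministic increasing bijection and preserves inaccessibility.

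The third step connects this exact result to the statement as written, where $\delta^2T$ is replaced by the undiscounted $\bar v^2=\int_0^T\delta(t)^2\diff t$. This is the main obstacle. The exact clock carries the weight $e^{-rt}$, so $\bar v^2$ is correct only to leading order. I will state the replacement as an approximation, valid when $rT\ll 1$ or when the discount weight varies little over $[0,T]$. I will justify it by expanding $e^{-rt}=1-rt+O(r^2t^2)$ and bounding the error by $rT\,\bar v^2$, which is the regime the slowly-evolving assumption in the statement is meant to encode. Alternatively, in the limit $r\to0$, where \eqref{eq:kappa_cp} reduces to $\kappa=4/(\delta^2T)$, the replacement $\delta^2T\mapsto\bar v^2$ is exact.

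The final step evaluates the integral. Substituting $\delta(t)^2=4\sigma_F^2/(k_0+\dot kt)$ from \eqref{eq:delta_t} gives
\[
\bar v^2=\frac{4\sigma_F^2}{\dot k}\ln\!\left(1+\frac{\dot kT}{k_0}\right).
\]
Expanding $\ln(1+u)=u-u^2/2+\cdots$ as $\dot k\to0$ yields the limit $4\sigma_F^2T/k_0=\delta_0^2T$.
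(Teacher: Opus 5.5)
Your proposal is correct. It rests on the same device as the paper, a deterministic time change, but you carry it out more carefully than the paper does.

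\textbf{How your route differs from the paper's.} The paper's proof asserts that after a time change the transition density depends on $\delta(\cdot)$ only through $\bar v^2=\int_0^T\delta(t)^2\,\diff t$, and then evaluates the integral. By passing first to $X=e^{-rt}P$, you expose what that assertion drops:
\begin{itemize}
\item the quadratic-variation rate of $X$ is $\delta(t)^2e^{-rt}X$;
\item so the clock that governs the law of $P(T)$ is $\tau(T)=\int_0^T\delta(t)^2e^{-rt}\,\diff t$;
\item so \eqref{eq:call} holds exactly with your discount-weighted $\delta_{\mathrm{eff}}^2$, equivalently with $\kappa=4e^{-rT}/\tau(T)$, giving $c=4P/\tau(T)$ and $a=4K_{\mathrm{str}}e^{-rT}/\tau(T)$. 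Your displayed expression for $\kappa$ is garbled and should be replaced by this.
\end{itemize}
For $\dot k\neq0$, the replacement by $\bar v^2$ is therefore exact only as $r\to0$, as you say.

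What the paper's shortcut buys is the elementary logarithmic formula. The exact clock $4\sigma_F^2\int_0^T e^{-rt}(k_0+\dot kt)^{-1}\,\diff t$ has no elementary closed form; it involves exponential integrals. What your route buys is an exact statement with a quantified error. That error bound is in fact non-asymptotic: $1\le rT/(1-e^{-rT})\le 1+rT$ and $e^{-rT}\le e^{-rt}\le 1$ give $|\delta_{\mathrm{eff}}^2T-\bar v^2|\le rT\,\bar v^2$ directly, with no need to expand $e^{-rt}$.

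\textbf{Two small corrections.}

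First, the hypothesis that $k$ evolves slowly means $\dot kT/k_0\ll1$, not $rT\ll1$, so you should not identify the two. The stated hypothesis does control the error, though. Since $re^{-rt}/(1-e^{-rT})$ and $1/T$ both integrate to one on $[0,T]$, you can subtract $\delta_0^2$ from $\delta(t)^2$ before comparing. This sharpens the discrepancy to order $\min(1,rT)\cdot(\dot kT/k_0)\,\bar v^2$.

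Second, do not lean on inaccessibility of zero. For $\beta=1/2$ under $\Q$, $4\tilde X$ is a squared Bessel process of dimension zero, which hits zero with positive probability and is absorbed there, contrary to \Cref{prop:boundary}. Accordingly, $4\tilde X(\tau)/\tau$ is non-central $\chi^2$ with zero degrees of freedom (not $b=2$). The $b=2$ enters \eqref{eq:call} only through the identity $1-\chi^2(a;0,c)=\chi^2(c;2,a)$. Your argument survives unchanged: it needs only that the deterministic time change carries the absorbed process to the absorbed process, and \eqref{eq:call} is exactly the price under absorption at zero.
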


Emissions have a dampening effect: higher emission rates increase $\dot{k}$, reducing $\bar{v}^2$ and hence option prices. Intuitively, growing liquidity compresses the range of possible price outcomes.

\begin{remark}[Emissions as a dividend yield]
The emission effect can be interpreted as an effective continuous dividend yield. Expanding $\bar{v}^2$ for small $\dot{k}T/k_0$:
\[
    \bar{v}^2 \approx \delta_0^2 T \left(1 - \frac{\dot{k}T}{2k_0} + \cdots\right),
\]
so the integrated variance is reduced by a factor that is first-order in $\dot{k}/k_0$. Defining an effective dividend yield $q_{\mathrm{eff}} = \dot{k}/(2k_0)$, the emission-adjusted CEV price approximately equals the zero-emission price on an underlying with continuous yield $q_{\mathrm{eff}}$. In practice, one can estimate $\dot{k}$ from the emission schedule and discount the option value accordingly.

Bittensor's emission allocation \eqref{eq:emission} introduces a feedback loop: high staking flows raise $S_i$, increasing the subnet's emission share, deepening the pool, and compressing volatility. Modeling this endogenous $\dot{k}$ coupled to the flow process is left for future work.
\end{remark}

\section{Numerical Analysis}\label{sec:numerical}

\subsection{Parameter Calibration}\label{sec:calibration}

We calibrate the model to Bittensor subnet data as of February 2026, retrieved from the Taostats API. We select three representative subnets spanning the range of pool depths observed across the network:

\begin{table}[H]
\centering
\caption{Representative Bittensor subnet parameters (median values over the sample period August 8, 2025 to February 23, 2026). Reserves are in native token units; $\hat{\sigma}_F$ is the annualized standard deviation of daily TAO reserve changes.}\label{tab:subnets}
\begin{tabular}{lrrrrr}
\toprule
Subnet & $x_0$ (TAO) & $y_0$ ($\alpha$) & $k$ ($\times 10^9$) & $P_0$ (TAO/$\alpha$) & $\hat{\sigma}_F$ \\
\midrule
Shallow (SN58) & 4{,}019 & 1{,}835{,}800 & 7.4 & 0.0022 & 2{,}293 \\
Medium (SN1)   & 22{,}568 & 2{,}340{,}831 & 52.8 & 0.0096 & 3{,}571 \\
Deep (SN3)     & 54{,}445 & 2{,}151{,}385 & 117.1 & 0.0253 & 8{,}250 \\
\bottomrule
\end{tabular}
\end{table}

The flow volatility $\hat{\sigma}_F$ is estimated from the standard deviation of daily net TAO reserve changes, annualized ($\hat{\sigma}_F = \hat{s}_{\Delta F} \cdot \sqrt{365}$, where $\hat{s}_{\Delta F}$ is the sample standard deviation of daily flow changes). This estimator is consistent under the diffusion assumption and can be computed directly from on-chain reserve data. The risk-free rate is set to $r = 5\%$ (approximate stablecoin lending rate in DeFi).

\begin{remark}[Distributional properties of staking flows]\label{rem:normality}
Shapiro--Wilk tests reject the normality of daily TAO reserve changes at the 5\% level for all 98 subnets in our sample. The median excess kurtosis is 10.7 (range $[0.6, 189.6]$) and the median skewness is $-0.62$, indicating heavy-tailed, left-skewed flow distributions. The Brownian diffusion assumption (\Cref{def:flow}) is thus an approximation, as is standard for continuous-time models applied to discrete financial data. The heavy tails are consistent with occasional large staking events (``whale'' trades) and support the jump-diffusion extension discussed in \Cref{sec:discussion}. That the cross-sectional backtest (\Cref{sec:backtest}) produces reasonable hedging errors (median MAE $\approx 3$\% of spot) despite these departures suggests the CEV framework is robust to moderate violations of the diffusion assumption.
\end{remark}

\begin{remark}[Testable predictions]
Even in the absence of traded options, the CEV model generates testable predictions about the physical price process. The return variance should be proportional to $P^{2(\beta-1)} = P^{-1}$ for the constant-product case, a relationship that can be estimated from realized variance regressions on price levels using on-chain data.
\end{remark}

\subsection{Monte Carlo Validation}\label{sec:mc}

To validate the closed-form formula, we simulate $N = 100{,}000$ paths of the flow process $\diff F = \mu_F \diff t + \sigma_F \diff W$ using Euler--Maruyama discretization with hourly time steps. At each step, the TAO reserve updates as $x_{t+\Delta t} = x_t + \Delta F_t$, the alpha reserve follows from the invariant $y = k/x$, and the terminal price is $P_T = x_T^2 / k$. The Monte Carlo call price is $C_{\mathrm{MC}} = e^{-rT} \E[\max(P_T - K_{\mathrm{str}}, 0)]$.

\Cref{fig:sample_paths} illustrates the qualitative difference between CEV and GBM dynamics. The left panel overlays paths from both models driven by the same Brownian increments: the paths visibly diverge as price moves away from $P_0$, with CEV producing wider swings at low prices and narrower swings at high prices. The right panel shows the terminal price distribution from 50{,}000 Monte Carlo paths, revealing the CEV model's heavier left tail, a direct consequence of the leverage effect.

\begin{figure}[H]
\centering
\includegraphics[width=\textwidth]{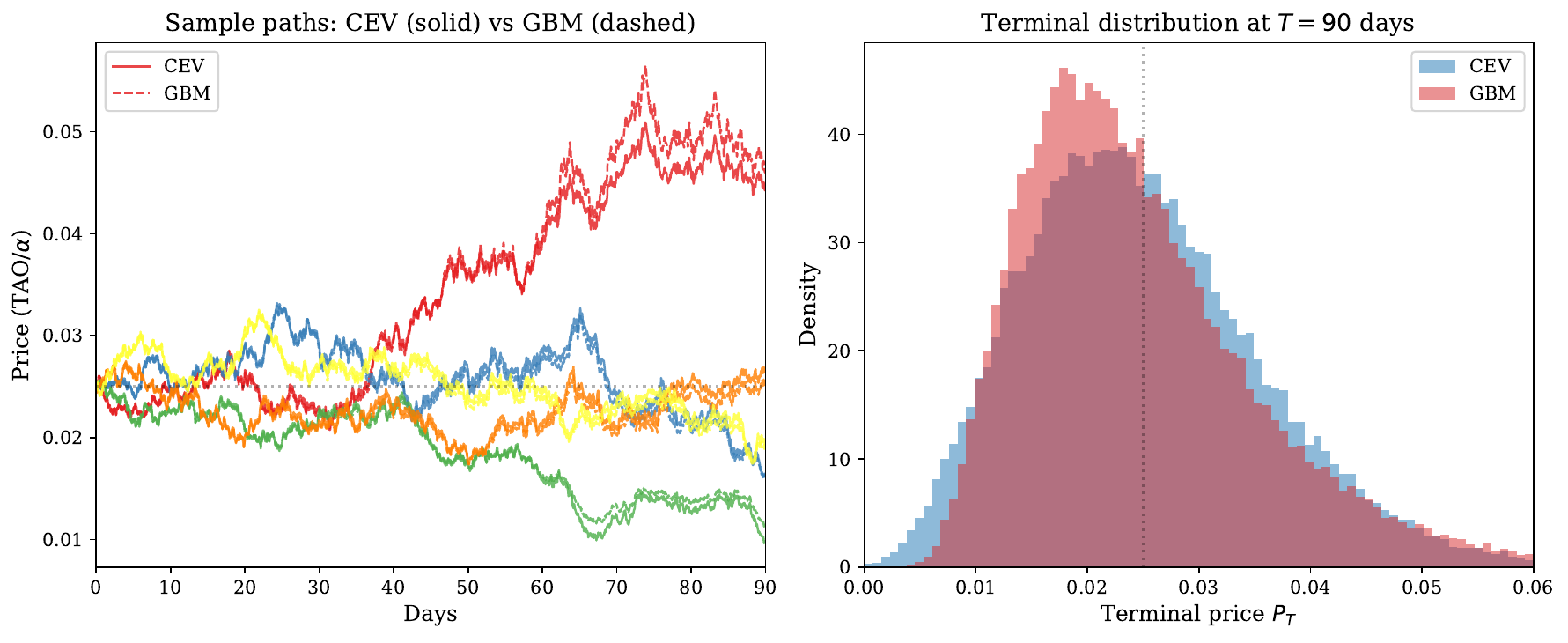}
\caption{Left: simulated price paths under CEV (solid) and GBM (dashed) driven by identical Brownian increments, with volatilities matched at $P_0$. The paths visibly diverge as price moves away from $P_0$: CEV produces wider swings at low prices (leverage effect) and narrower swings at high prices. Right: terminal price distribution from 50{,}000 Monte Carlo paths at $T = 90$ days. The CEV distribution exhibits a heavier left tail and positive skew relative to GBM, consistent with the structural leverage effect. Parameters: $P_0 = 0.025$, $k = 5 \times 10^5$, $\sigma_F = 48.7$.}
\label{fig:sample_paths}
\end{figure}

\Cref{fig:mc_validation} compares the Monte Carlo prices with the CEV closed-form formula across strikes for two pool depths ($T = 30$ days). For the deep pool ($k = 10^9$), the maximum deviation is less than 0.5\% of spot. For the shallow pool ($k = 10^6$), MC prices exceed the CEV formula by 1--3\% of spot: when $\sigma_F$ is large relative to the reserve $x_0 = \sqrt{kP_0}$, a single hourly flow shock can represent several percent of the reserve, violating the diffusion limit's small-increment assumption. The positive bias reflects truncation of large negative flow shocks at the reflecting boundary $x > 0$. The Monte Carlo standard error is below 0.2\% of spot for all strikes and pool depths.

\begin{figure}[H]
\centering
\includegraphics[width=0.85\textwidth]{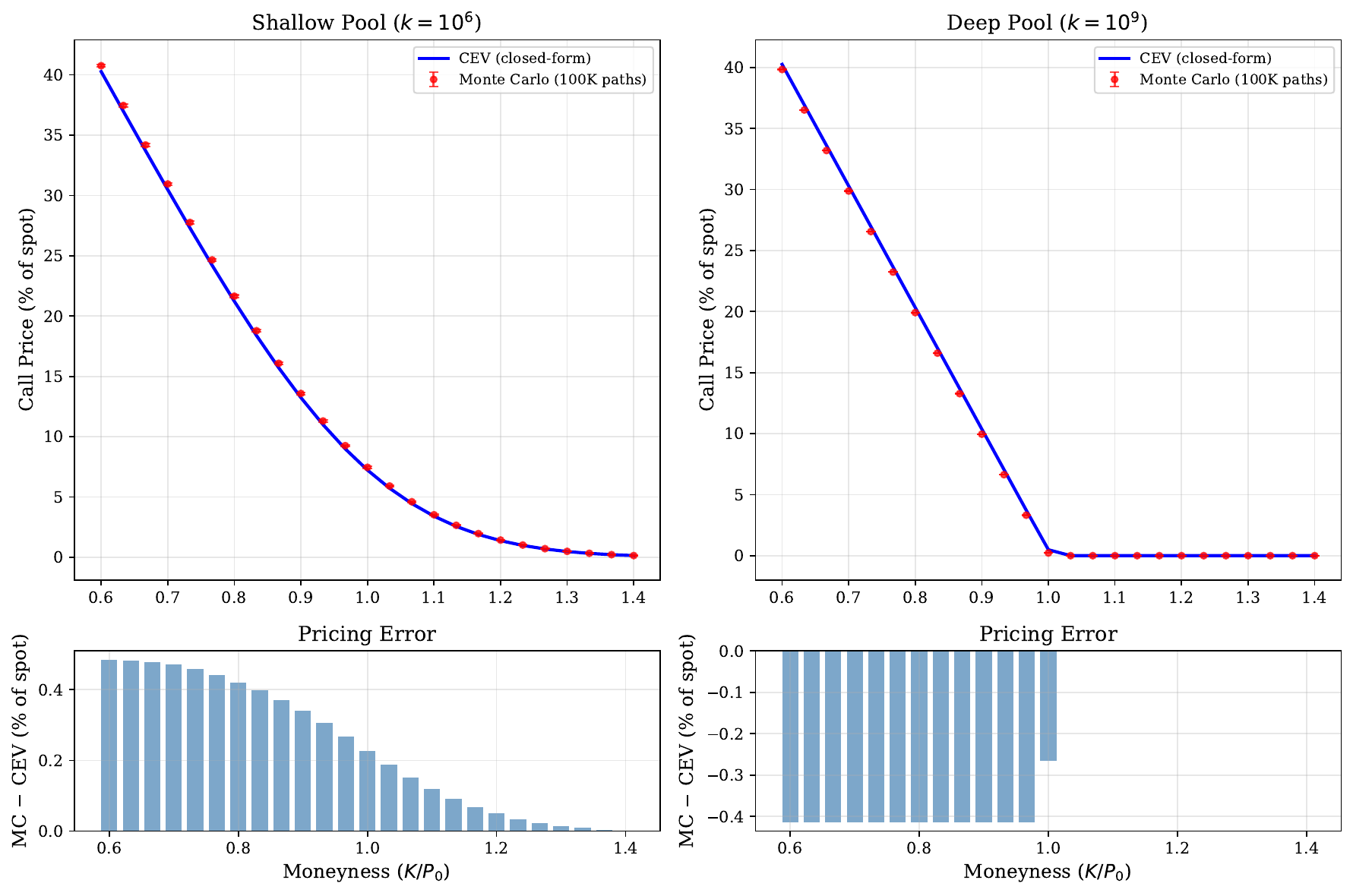}
\caption{Monte Carlo validation of the CEV pricing formula for a shallow pool ($k = 10^6$, left) and a deep pool ($k = 10^9$, right). Top: closed-form CEV call prices (line) vs.\ Monte Carlo estimates with 95\% confidence intervals (points). Bottom: pricing error (MC minus CEV) as a percentage of spot. The shallow pool shows a systematic positive bias of 1--3\% of spot, reflecting the Euler--Maruyama discretization error that is amplified when flow volatility is large relative to the reserve. The deep pool shows near-perfect agreement (errors $< 0.5$\%). Illustrative parameters: $P_0 = 0.025$, $\sigma_F = 48.7$, $T = 30$ days, 100,000 paths.}
\label{fig:mc_validation}
\end{figure}

\subsection{Comparative Statics}\label{sec:comp_statics}

\Cref{fig:price_vs_k} illustrates the relationship between option prices and pool depth. The left panel shows that CEV and Black--Scholes ATM call prices are visually indistinguishable. The right panel plots $|C_{\mathrm{CEV}} - C_{\mathrm{BS}}|$ on a log-log scale: OTM discrepancies dominate ATM by orders of magnitude, and all curves decline approximately as $O(k^{-1})$, confirming the scaling predicted by \Cref{prop:correction}.

\begin{figure}[H]
\centering
\includegraphics[width=\textwidth]{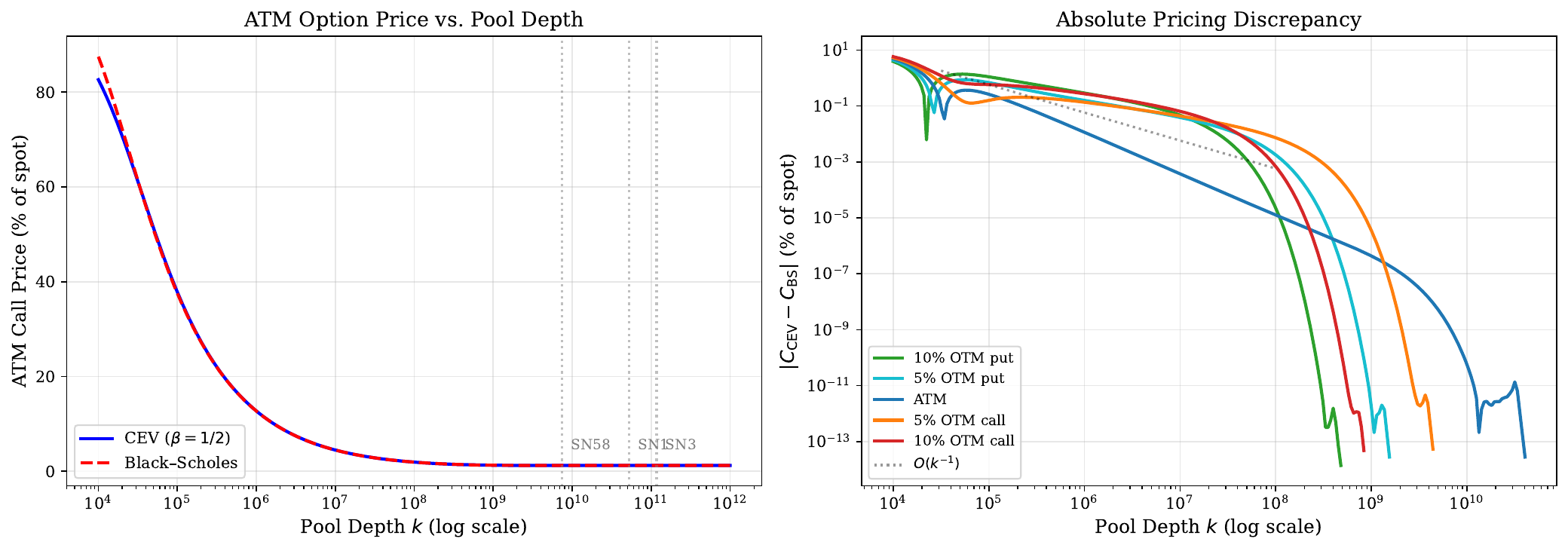}
\caption{Left: ATM call price (as \% of spot) vs.\ pool depth $k$. The CEV (solid blue) and Black--Scholes (dashed red) curves overlap, confirming that the models agree at the money. Right: absolute pricing discrepancy $|C_{\mathrm{CEV}} - C_{\mathrm{BS}}|$ (as \% of spot) on a log-log scale, for five moneyness levels. OTM options (puts in green/cyan, calls in orange/red) show discrepancies orders of magnitude larger than ATM (blue), reflecting the leverage-induced skew. All curves decline approximately as $O(k^{-1})$ (dotted reference line). Illustrative parameters: $P_0 = 0.025$, $\sigma_F = 48.7$, $T = 90$ days, $r = 5\%$.}
\label{fig:price_vs_k}
\end{figure}

\Cref{fig:vol_smile} plots implied volatility as a function of moneyness. Absolute implied volatility is higher for shallower pools ($\sigma_{\mathrm{ATM}} \propto k^{-1/2}$), but the normalized curves overlap almost exactly, confirming the universal-skew result of \Cref{prop:skew}: the skew depends on $\beta$ alone, not on pool depth.

\begin{figure}[H]
\centering
\includegraphics[width=0.85\textwidth]{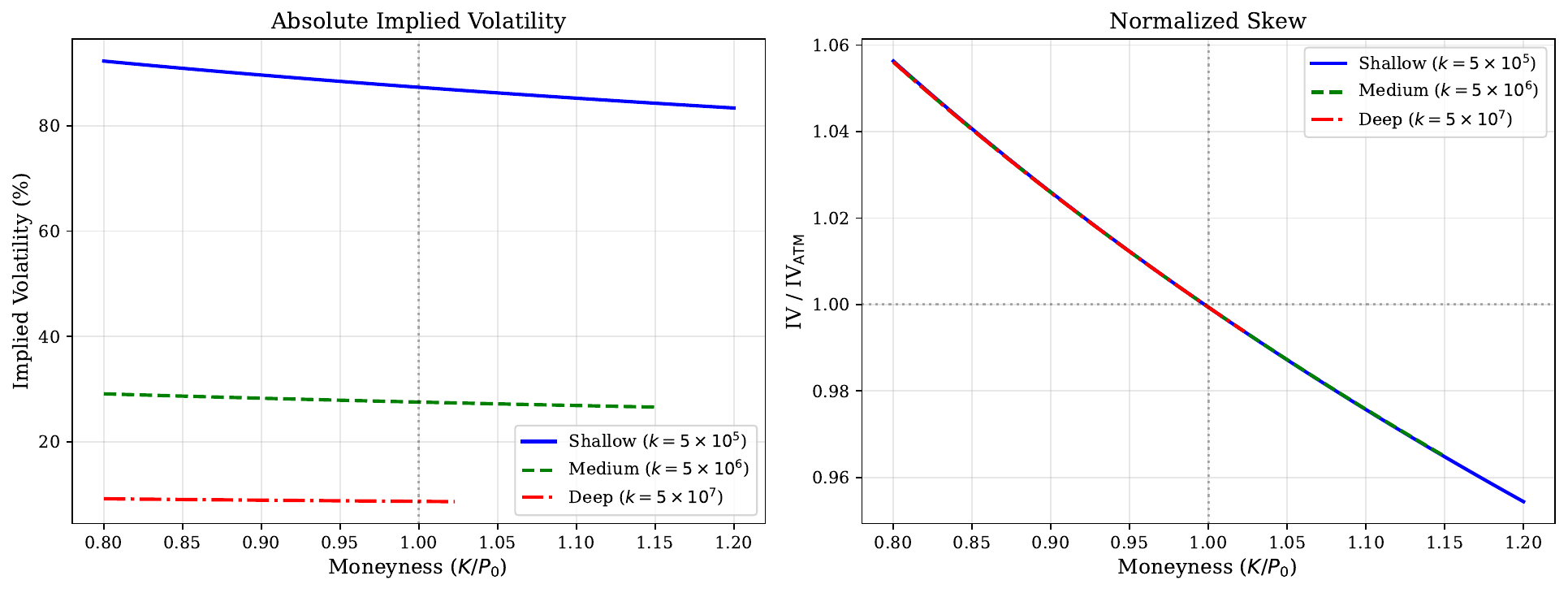}
\caption{Left: absolute Black--Scholes implied volatility extracted from CEV prices. Shallower pools produce higher absolute volatility due to larger $\delta$. Right: implied volatility normalized by the at-the-money level, isolating the skew shape. The three normalized curves overlap, confirming that the skew depends only on $\beta = 1/2$, not on pool depth. Illustrative parameters: $P_0 = 0.025$, $\sigma_F = 48.7$, $T = 90$ days.}
\label{fig:vol_smile}
\end{figure}

\Cref{fig:greeks} compares the CEV and Black--Scholes delta and gamma. The CEV delta is steeper for low prices and flatter for high prices, reflecting the price-dependent volatility. Both gammas peak below the strike, but the CEV gamma is sharper and peaks further below, concentrating risk in the high-volatility (low-price) region.

\begin{figure}[htbp]
\centering
\includegraphics[width=\textwidth]{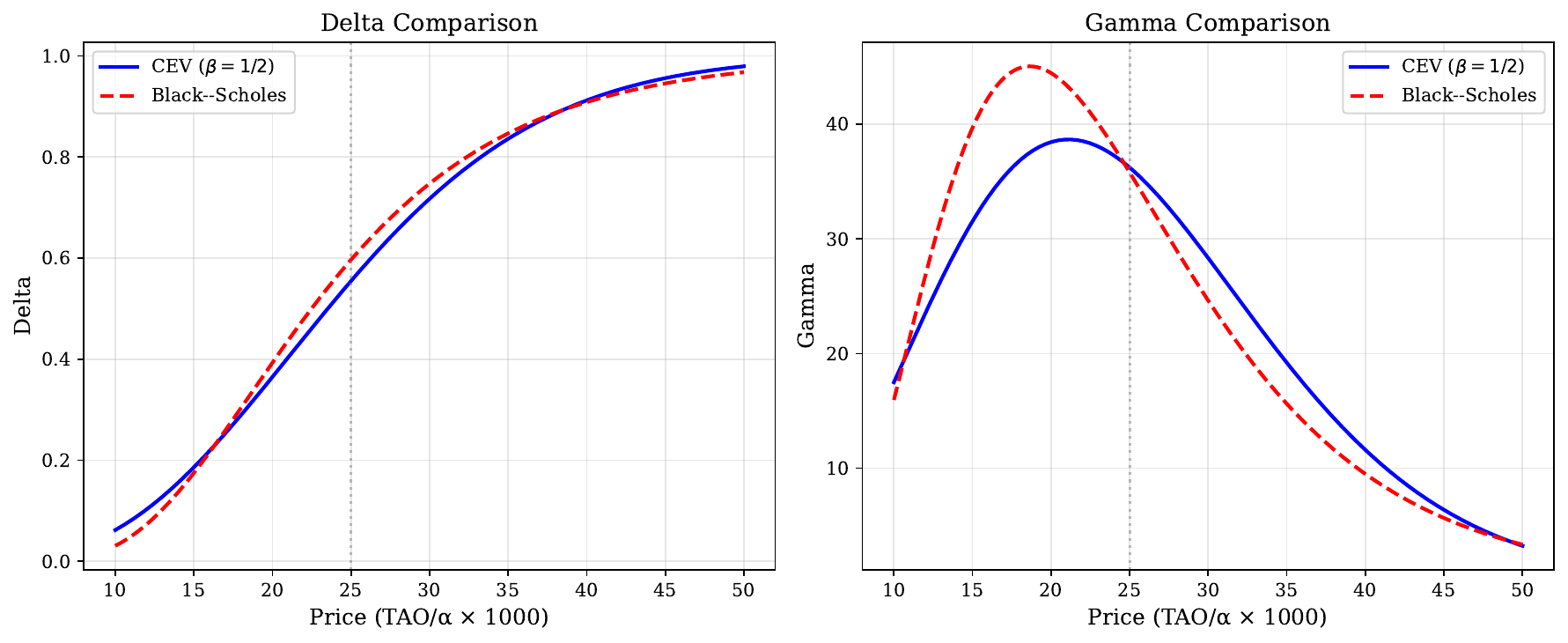}
\caption{Comparison of CEV ($\beta = 1/2$) and Black--Scholes Greeks for an ATM European call ($K = 0.025$) on a shallow pool. Left: delta. Right: gamma. Both gammas peak below the strike, but the CEV gamma is sharper and peaks further below, reflecting its concentration in the high-volatility (low-price) region. Illustrative parameters: $k = 5 \times 10^5$, $\sigma_F = 48.7$, $T = 90$ days.}
\label{fig:greeks}
\end{figure}

\subsection{Effect of Emissions}\label{sec:emissions_numerical}

\Cref{fig:emissions} shows that higher emissions compress option prices, particularly at longer maturities where cumulative liquidity deepening is greatest. \Cref{fig:liquidity_greek} plots the liquidity Greek $\Lambda = \partial C / \partial k$: it is negative throughout and decays rapidly, confirming that liquidity sensitivity is primarily a concern for shallow pools.

\begin{figure}[htbp]
\centering
\includegraphics[width=0.85\textwidth]{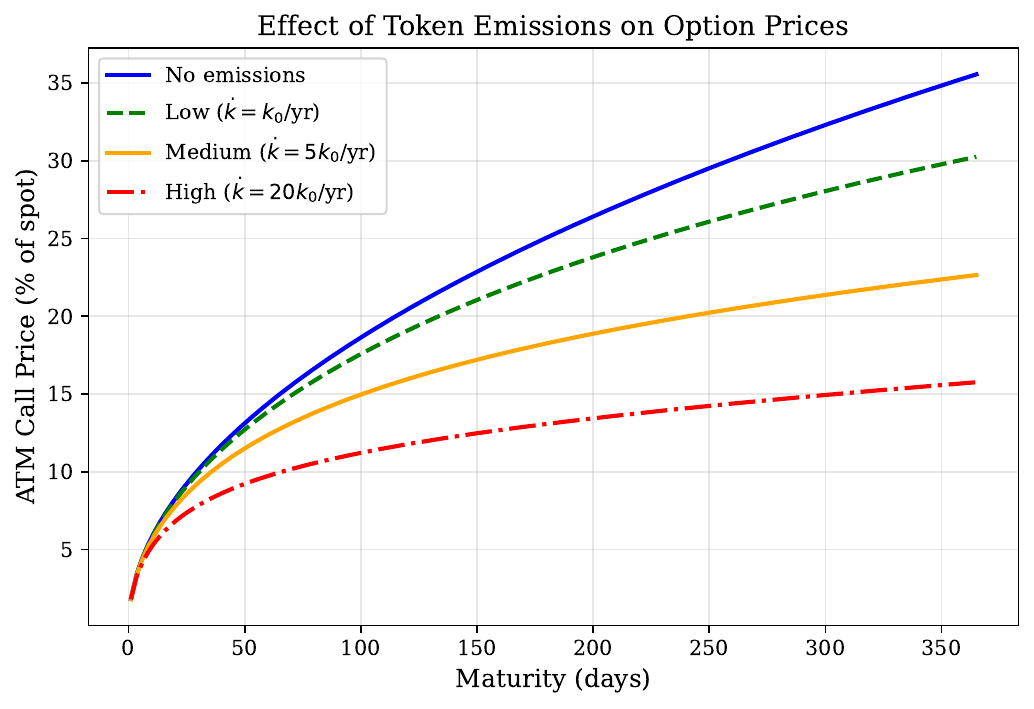}
\caption{ATM call option price (as \% of spot) vs.\ maturity under different emission rates, expressed as multiples of the initial pool invariant per year. Higher emissions deepen the pool over time, compressing volatility and reducing option prices at longer maturities. The effect is most pronounced for shallow pools with high emission-to-liquidity ratios. Illustrative parameters: $P_0 = K = 0.025$, $k_0 = 5 \times 10^5$, $\sigma_F = 48.7$.}
\label{fig:emissions}
\end{figure}

\begin{figure}[htbp]
\centering
\includegraphics[width=0.85\textwidth]{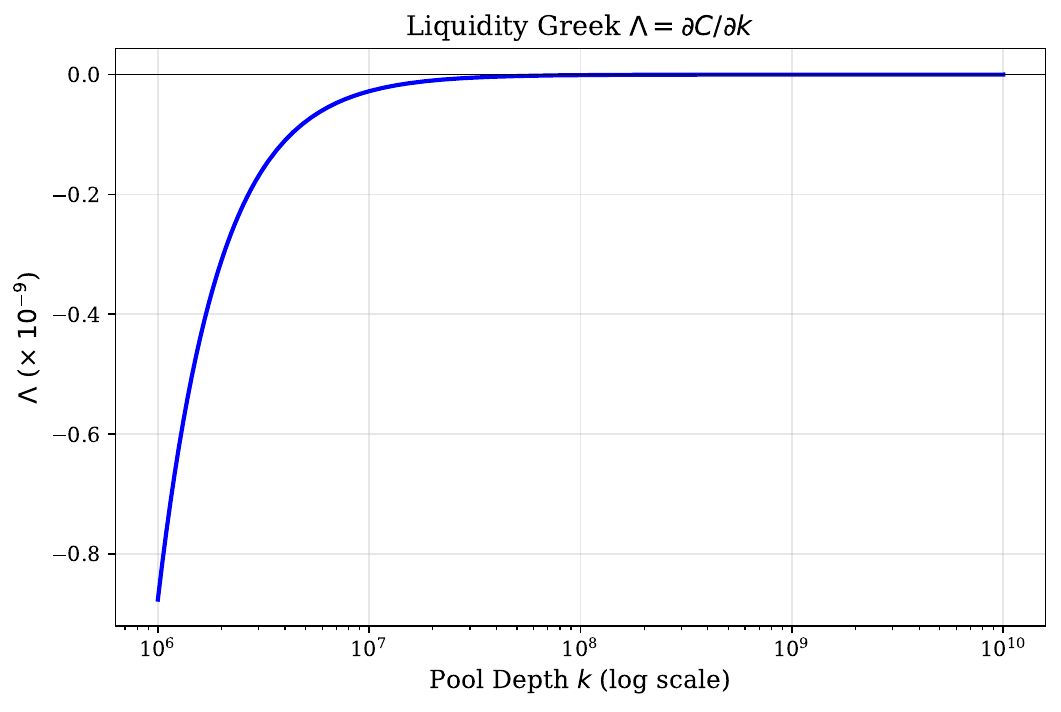}
\caption{The liquidity Greek $\Lambda = \partial C / \partial k$ for an ATM call as a function of pool depth. Negative values indicate that increasing pool depth reduces option value. The sensitivity is concentrated in shallow pools and becomes negligible for $k > 10^9$. Illustrative parameters: $P_0 = K = 0.025$, $\sigma_F = 48.7$, $T = 30$ days.}
\label{fig:liquidity_greek}
\end{figure}

\subsection{Empirical Backtest}\label{sec:backtest}

We conduct a cross-sectional backtest across all active Bittensor subnets to test whether the theoretical divergence between CEV and Black--Scholes pricing varies systematically with pool depth. Our sample construction proceeds as follows: of the 128 subnets in the network, 98 have sufficient on-chain history (at least 42 daily observations, covering a 14-day calibration window plus 14-day option horizon) for the backtest; of these 98, a subset of 90 additionally pass the stricter data-quality screens required for the variance elasticity test in \Cref{sec:variance_test}. The two samples overlap but are not nested, since the variance test applies different filters (degenerate price paths, minimum rolling-window observations) than the backtest's MAE filter. Using daily data from the 98 backtest-eligible subnets retrieved via the Taostats API (August 8, 2025 to February 23, 2026), we execute the following procedure for each subnet and each rolling start date $t$:
\begin{enumerate}[leftmargin=*]
    \item Estimate $\hat{\sigma}_F$ from the trailing 14-day standard deviation of daily TAO reserve changes, annualized.
    \item Compute the pool invariant $k_t = x_t \cdot y_t$ from observed reserves.
    \item Sell a 14-day ATM European call ($K = P_t$) at the model price under both the CEV model ($\beta = 1/2$, $\delta_t = 2\hat{\sigma}_F / \sqrt{k_t}$) and Black--Scholes with matched ATM volatility ($\sigma_{\mathrm{eff}} = \delta_t P_t^{-1/2}$).
    \item Delta-hedge daily for 14 days using each model's delta, updating $k$ and recomputing deltas from observed reserves at each rebalance.
    \item At expiry, compute the hedged P\&L: premium collected plus cumulative hedge gains minus the realized payoff $\max(P_{t+14} - K, 0)$.
\end{enumerate}
We aggregate each subnet's trades into a single mean absolute hedging error (MAE, as \% of spot) for each model. Because the 14-day option windows overlap, per-subnet MAE estimates exhibit serial correlation; we address this by using the cross-sectional regression (one observation per subnet), which is free of this overlap bias. After filtering 16 subnets with degenerate price paths (MAE $> 50$\%, typically from near-zero reserves or extreme price dislocations), 82 subnets remain.

\Cref{fig:backtest} presents the cross-sectional results. The left panel plots each subnet's CEV hedging error against its BS hedging error, with color indicating pool depth $\log_{10}(k)$. Points cluster tightly along the 45-degree line. The right panel quantifies the relationship between relative hedging performance and pool depth. An OLS regression of the CEV/BS error ratio on $\log_{10}(k)$ yields
\[
    \frac{\text{MAE}_{\text{CEV}}}{\text{MAE}_{\text{BS}}} = 1.029 - 0.002 \cdot \log_{10}(k), \quad R^2 = 0.003, \quad p = 0.60.
\]
The slope is not statistically significant. This is consistent with the theory: because the normalized implied volatility smile is universal for $\beta = 1/2$ (\Cref{prop:skew}), the CEV and BS ATM call prices are nearly identical at every pool depth, yielding near-identical deltas and hedging errors. The backtest confirms that for ATM options, the two models are empirically indistinguishable. The CEV correction matters most for out-of-the-money options, where the implied volatility skew generates meaningful pricing differences (\Cref{fig:price_vs_k}), but this effect cannot be tested without active OTM option markets on AMM tokens.

Only 17 of 82 subnets (21\%) show lower hedging error under CEV. At each rebalance, the BS hedge uses $\sigma_{\mathrm{eff}} = \delta P_t^{-1/2}$ (the CEV local volatility), ensuring a fair ATM comparison but mechanically limiting scope for divergence. Both models produce modest hedging errors (median MAE $\approx 3$\% of spot), confirming the diffusion approximation is reasonable for most subnets.

The MAE $> 50$\% filter could introduce selection bias if it disproportionately removes shallow pools, where CEV--BS divergence should be largest. The 16 excluded subnets are indeed shallower on average (median $\log_{10}(k) = 9.20$ vs.\ $9.93$ for the included subnets; see \Cref{tab:dropped_subnets} in the appendix). However, re-running the regression on all 98 subnets without any MAE filter yields a qualitatively identical result:
\[
    \frac{\text{MAE}_{\text{CEV}}}{\text{MAE}_{\text{BS}}} = 1.048 - 0.003 \cdot \log_{10}(k), \quad R^2 = 0.016, \quad p = 0.21.
\]
The slope remains negative and insignificant, confirming that the main finding---near-identical ATM hedging performance regardless of pool depth---is robust to the inclusion of degenerate subnets (\Cref{app:robustness}).

\begin{figure}[H]
\centering
\includegraphics[width=\textwidth]{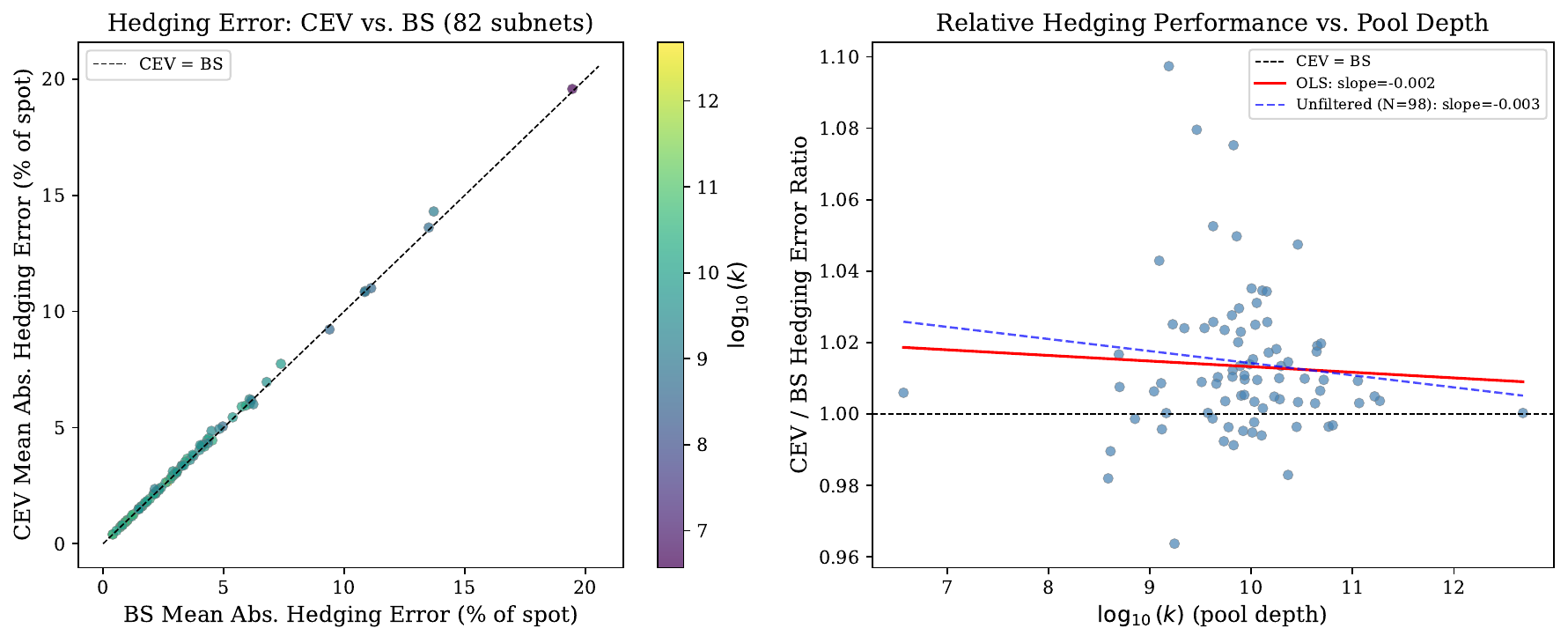}
\caption{Cross-sectional delta-hedged backtest of 14-day ATM calls across 82 Bittensor subnets (August 8, 2025 to February 23, 2026). Left: each subnet's mean absolute hedging error under CEV ($y$-axis) vs.\ BS ($x$-axis), colored by pool depth $\log_{10}(k)$. Points cluster tightly along the 45-degree line, confirming that the two models produce near-identical hedging errors at the money. Right: the CEV/BS error ratio shows no significant dependence on pool depth (OLS slope $= -0.002$, $p = 0.60$). Data source: Taostats API, daily pool snapshots.}
\label{fig:backtest}
\end{figure}

\subsection{Variance Elasticity Test}\label{sec:variance_test}

The hedging backtest in \Cref{sec:backtest} compares CEV and Black--Scholes at the money, where both models agree by construction. We now test a prediction that distinguishes the two models at all strikes: the CEV variance elasticity.

Under the CEV model, the instantaneous return variance is $\sigma_{\mathrm{ret}}^2(P) = \delta^2 P^{2(\beta-1)}$. Substituting $\delta = 2\sigma_F / \sqrt{k}$:
\begin{equation}\label{eq:var_elasticity}
    \sigma_{\mathrm{ret}}^2(P) = \frac{4\sigma_F^2}{k} \cdot P^{2(\beta-1)}.
\end{equation}
Taking logarithms and rearranging:
\begin{equation}\label{eq:var_test}
    \log\!\left(\widehat{\mathrm{RV}} \cdot k \,/\, \hat{\sigma}_F^2\right) = \mathrm{const} + 2(\beta - 1) \log P,
\end{equation}
where $\widehat{\mathrm{RV}}$ is the annualized realized variance of daily log returns in a rolling 14-day window (the sum of squared daily log returns multiplied by $365/14$). For $\beta = 1/2$, the slope is $-1$; for GBM ($\beta = 1$), the slope is $0$. The left-hand side of \eqref{eq:var_test} controls for both pool depth $k$ and flow volatility $\hat{\sigma}_F^2$ (the annualized sample variance of daily TAO reserve changes within the same window), isolating the pure price-variance relationship.

We estimate \eqref{eq:var_test} within each subnet that passes additional data-quality screens beyond the history requirement of \Cref{sec:backtest}: no degenerate price paths (price range exceeding $10^4 \times$, zero price variance, or non-positive reserves), and at least 10 valid 14-day rolling-window observations after removing $3\sigma$ outliers. The $3\sigma$ filter removes rolling-window observations whose log-adjusted realized variance lies more than three standard deviations from the subnet mean, reducing the influence of extreme jump days on the slope estimate. Of the 98 subnets with sufficient history, 90 survive these screens; the eight additional exclusions have extreme price-to-reserve ratios or insufficient intra-window variation to estimate the controlled regression reliably. The resulting 90 within-subnet slope estimates form a distribution. \Cref{fig:var_elasticity} presents the results. The median slope is $-0.86$ (interquartile range $[-0.98, -0.71]$), with 94\% of subnets showing negative slopes. We conduct two pre-specified one-sample $t$-tests: against the GBM null (slope $= 0$), which rejects decisively ($t = -11.3$, $p < 10^{-4}$); and against the exact CEV prediction (slope $= -1$; $t = 3.7$, $p < 0.001$). Both rejections survive Bonferroni correction for two hypotheses (adjusted $\alpha = 0.025$). The implied variance elasticity corresponds to $\hat{\beta} \approx 0.57$, modestly above the theoretical $1/2$ for constant-product AMMs. Three factors may contribute to this attenuation: (i) the discrete, jump-like nature of large staking events attenuates measured elasticity relative to the continuous-time prediction; (ii) measurement noise from overlapping 14-day rolling windows biases slope estimates toward zero; and (iii) some Bittensor pools may operate with effective weights slightly above $1/2$ due to the recent introduction of concentrated liquidity features. Disentangling these factors requires per-subnet estimation of effective pool weights, which we leave for future work.

The key finding is that the data strongly favor the CEV structure over GBM: the variance-price elasticity is robustly negative across subnets, as predicted by $\beta < 1$, and the magnitude is close to the $-1$ predicted by $\beta = 1/2$.

\begin{figure}[H]
\centering
\includegraphics[width=\textwidth]{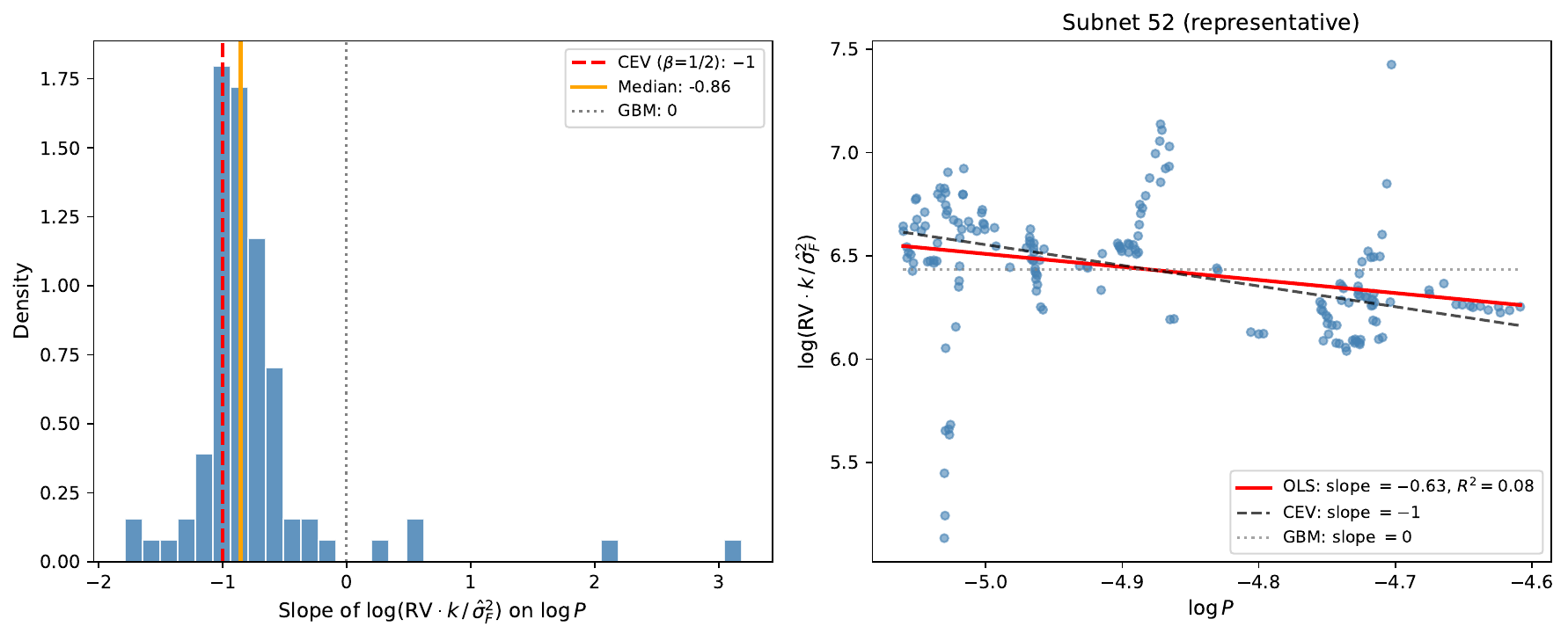}
\caption{Cross-sectional test of the CEV variance elasticity. Left: distribution of within-subnet slopes of $\log(\widehat{\mathrm{RV}} \cdot k / \hat{\sigma}_F^2)$ on $\log P$ across 90 Bittensor subnets. The median slope is $-0.86$, close to the CEV prediction of $-1$ (dashed red) and far from the GBM prediction of $0$ (dotted gray). Right: scatter plot for a representative subnet (SN52), illustrating the negative relationship between the controlled realized variance measure and price. Data source: Taostats API, August 8, 2025 to February 23, 2026.}
\label{fig:var_elasticity}
\end{figure}

\section{Discussion}\label{sec:discussion}

\subsection{Limitations}

First, most AMMs charge a swap fee (e.g., 0.3\% on Uniswap). Fees modify the effective invariant: a trade of $\Delta x$ yields $\Delta y = y(1-\phi)\Delta x / (x + (1-\phi)\Delta x)$, where $\phi$ is the fee rate. This introduces a bid-ask spread but does not alter the qualitative CEV structure; the main effect is to reduce the effective $\sigma_F$ by a factor of $(1-\phi)$ in the diffusion limit. Bittensor's dTAO pools currently charge no explicit swap fee, making our zero-fee model directly applicable.

Second, the diffusion assumption for staking flows is an approximation. In practice, large staking events (``whale'' transactions) can produce jump-like price movements. Across the 98 subnets in our sample, we identify jump days as those where the absolute daily TAO reserve change exceeds $3\hat{\sigma}_F$ of the trailing 14-day window. Such events occur on approximately 7.7\% of trading days (1{,}394 of 18{,}130 total subnet-days), but account for a median of 47\% of the total realized variance of reserve flows across subnets (interquartile range 31--65\%). Despite their outsized variance contribution, the cross-sectional hedging errors (\Cref{fig:backtest}) remain modest (median MAE $\approx 3$\% of spot), suggesting that most jumps are small relative to the cumulative diffusive variation over a 14-day hedging horizon. A Merton-type jump-diffusion extension \citep{merton1976option} would augment the flow process with a compound Poisson component $J_t \diff N_t$, leading to a jump-diffusion CEV model whose option pricing formula involves a weighted sum of CEV prices across possible jump scenarios. For subnets with frequent large staking events, such an extension may yield tighter hedging bounds, particularly for short-dated options where a single jump can dominate the realized path.

Third, the risk-neutral pricing argument requires the ability to delta-hedge, which is imperfect due to AMM slippage. Our replication premium (\Cref{prop:replication}) provides a bound on this friction, but a more rigorous treatment would employ the utility-based framework of \citet{davis1993european} for markets with transaction costs.

Fourth, the model treats flow volatility $\sigma_F$ as constant. In practice, staking activity exhibits time-of-day effects, momentum, and regime changes. A stochastic volatility extension that layers Heston-type dynamics \citep{heston1993closed} onto the flow process would capture these features at the cost of analytical tractability.

Fifth, staking flows may respond to the TAO/USD exchange rate, since TAO trades on centralized exchanges. For TAO-denominated derivatives, the CEV dynamics describe the alpha/TAO price conditional on a given flow process, so the framework remains valid. For USD-denominated derivatives, one would need to jointly model the TAO/USD price and the flow process, likely introducing stochastic correlation.

Sixth, AMM token prices are vulnerable to manipulation near option expiry. The cost of moving the price by a fraction $\epsilon$ is approximately $\epsilon \cdot x$ (the TAO reserve), which for shallow pools may be small relative to the option payoff gained. Practical implementations should incorporate safeguards such as time-weighted average prices (TWAPs) for settlement or oracle-based price feeds aggregated over multiple blocks.

\subsection{Extensions}

\emph{Concentrated liquidity.} Uniswap V3 restricts the constant-product invariant to a bounded price range $[P_a, P_b]$. Within this range, local dynamics are equivalent to a constant-product AMM with effective invariant $k_{\mathrm{eff}}$, so our CEV result applies locally; the framework of \citet{cartea2023decentralised} provides a natural starting point for this extension. At the range boundaries, the position becomes single-sided, which could be modeled as an absorbing or reflecting barrier. \emph{Cross-subnet options.} Correlating the Brownian motions across subnet staking flows enables pricing of basket or spread options on multiple alpha tokens. \emph{American and perpetual options.} American options can be priced via the free-boundary CEV formulation \citep{detemple2002american}; perpetual options \citep{dave2023perpetual} fit naturally through the stationary solution of the pricing PDE.

\subsection{Practical Implications}

The central practical implication is that Black--Scholes \emph{underprices} downside protection on AMM tokens at every pool depth. The leverage effect (\Cref{rem:leverage}) elevates implied volatility for OTM puts: as the token price falls toward the strike, the bonding curve amplifies volatility, making further declines more likely than the lognormal model predicts. Because the normalized skew depends on $\beta$ rather than $k$ (\Cref{fig:vol_smile}), this is a structural feature, not a small-pool artifact. Conversely, Black--Scholes overprices OTM calls.

As a concrete example, consider 90-day 20\%-out-of-the-money puts on three representative Bittensor subnets (\Cref{tab:subnets}). The CEV model prices these puts 10--28\% higher than Black--Scholes (with matched ATM volatility): 12.1\% vs.\ 11.1\% of spot for the shallow pool, 0.52\% vs.\ 0.41\% for the medium pool, and 0.43\% vs.\ 0.34\% for the deep pool. A subnet treasury holding 100{,}000 TAO worth of alpha tokens and seeking downside protection would pay 95--1{,}070 TAO more under the CEV model than Black--Scholes suggests, depending on pool depth. A market maker using Black--Scholes to sell these puts would systematically underprice the risk.

\section{Conclusion}\label{sec:conclusion}

We have shown that the price of a token traded on a constant-weighted-product automated market maker follows a constant elasticity of variance process, with the CEV exponent equal to the numeraire weight. This result is derived from first principles: given a diffusion model for staking flows, the AMM's bonding curve mechanics fully determine the price dynamics, with the CEV exponent pinned by pool design rather than estimated from data. The Black--Scholes model emerges as the limiting case of infinite pool depth, providing a precise characterization of when standard pricing tools are adequate and when they are not.

The framework yields closed-form European option prices via the non-central chi-squared distribution, AMM-specific liquidity and emission Greeks, and a quantitative decomposition of the pricing discrepancy relative to Black--Scholes as a function of pool depth and moneyness. The CEV structure also provides a first-principles derivation of the leverage effect for AMM tokens: the negative correlation between price and volatility arises directly from the bonding curve, making it a structural prediction rather than an empirical regularity.

A cross-sectional variance elasticity test across 90 subnets provides direct evidence for the CEV structure: after controlling for pool depth and flow volatility, realized return variance scales as $P^{-0.86}$, strongly rejecting the GBM null of $P^0$ ($p < 10^{-4}$) and broadly consistent with the $P^{-1}$ predicted by $\beta = 1/2$. A complementary delta-hedged backtest of ATM calls across 82 subnets confirms near-identical hedging errors at the money ($p = 0.60$), consistent with the prediction that the CEV and Black--Scholes pricing discrepancy is concentrated in the wings. The backtest validates the diffusion approximation for most subnets (median hedging error $\approx 3\%$ of spot), and the variance elasticity test validates the CEV price-volatility relationship that drives the skew.

AMM-native tokens are proliferating across decentralized protocols, creating demand for derivative pricing tools tailored to these instruments. The CEV framework developed here provides a foundation for pricing, hedging, and risk management that respects the structural constraints of the underlying market mechanism.

\bigskip

\paragraph{Disclosure.} The author has no financial interest in Bittensor, TAO tokens, or any DeFi protocol discussed herein. Data were obtained from publicly available on-chain sources via the Taostats API. The author declares no conflicts of interest.

\bigskip

\bibliographystyle{plainnat}

\clearpage
\appendix

\section{Proofs}\label{app:proofs}

\subsection{Proof of \Cref{prop:correction}}

Write the CEV call price \eqref{eq:call} as $C(\beta)$, treating $\delta$ as a function of $\beta$ through the constraint $\sigma_{\mathrm{eff}} = \delta P^{\beta - 1} = \mathrm{const}$. Then $C(1) = C_{\mathrm{BS}}(\sigma_{\mathrm{eff}})$.

The parameters $a$, $b$, $c$ depend on $\beta$ through the exponents $2(1-\beta)$ and $b = 1/(1-\beta)$. As $\beta \to 1$, $b \to \infty$ and the non-central chi-squared distribution converges to a normal. The convergence rate is $O(1/b) = O(1-\beta)$, yielding $C(\beta) - C(1) = O(1-\beta) = O(\delta^2)$, since $\delta \propto (1-\beta)^{1/2}$ when $\sigma_{\mathrm{eff}}$ is held fixed.

To determine the sign at the money, note that the CEV model's conditional variance $\E[P_T^2 | P_0] - (\E[P_T | P_0])^2$ exceeds the GBM variance (by Jensen's inequality applied to the convex function $P \mapsto P^{2(\beta-1)}$ when $\beta < 1$). Since call prices are increasing in variance, $C_{\mathrm{CEV}} \geq C_{\mathrm{BS}}(\sigma_{\mathrm{eff}})$ at the money, with equality only when $\beta = 1$. For OTM calls, the sign reverses due to the skew: the CEV model assigns less probability mass to large upward moves, so $C_{\mathrm{CEV}} < C_{\mathrm{BS}}$ for sufficiently high strikes. This is consistent with the negative implied volatility skew (\Cref{prop:skew}).

\subsection{Proof of \Cref{prop:replication}}

The hedging error over $[t, t+\Delta t]$ from trading $\Delta_{\mathrm{CEV}} \cdot \Delta P$ units through the AMM with price impact \eqref{eq:slippage} is
\[
    \epsilon_t = S\!\left(\Gamma_{\mathrm{CEV}} (\Delta P)^2 / (2P)\right) \approx \frac{P^2 \Gamma_{\mathrm{CEV}}^2}{2k} \cdot \delta^2 P^{2\beta} \Delta t,
\]
using $(\Delta P)^2 \approx \delta^2 P^{2\beta} \Delta t$. Integrating and taking expectations under $\Q$ gives \eqref{eq:rep_premium}.

\subsection{Proof of \Cref{prop:emissions}}

For time-dependent $\delta(t)$, the CEV transition density depends on the total integrated variance $\bar{v}^2 = \int_0^T \delta(t)^2 \diff t$ when the time-change technique is applied. Substituting $\delta(t) = 2\sigma_F / \sqrt{k_0 + \dot{k}t}$ and evaluating the integral yields \eqref{eq:int_var}.

\section{Robustness of the MAE Filter}\label{app:robustness}

The cross-sectional backtest in \Cref{sec:backtest} excludes 16 subnets with mean absolute hedging error exceeding 50\% of spot. Because this filter could introduce selection bias---particularly if shallow pools are disproportionately excluded---we report the characteristics of the dropped subnets and a robustness check without any post-hoc filtering.

\Cref{tab:dropped_subnets} lists the 16 excluded subnets. All have extremely high MAE values (typically 1{,}000--20{,}000\% of spot), indicating degenerate price dynamics---most commonly a near-collapse to zero reserves followed by recovery, or an extreme price dislocation early in the sample. The excluded subnets are systematically shallower than the included ones: their median pool depth is $\log_{10}(k) = 9.20$, compared with $9.93$ for the 82 included subnets. This is expected, since shallow pools are more susceptible to the reserve depletions and extreme dislocations that generate degenerate hedging outcomes.

Re-running the OLS regression of the CEV/BS hedging error ratio on $\log_{10}(k)$ using all 98 subnets (without the MAE filter) yields:
\[
    \frac{\text{MAE}_{\text{CEV}}}{\text{MAE}_{\text{BS}}} = 1.048 - 0.003 \cdot \log_{10}(k), \quad R^2 = 0.016, \quad p = 0.21.
\]
The slope remains negative and statistically insignificant, confirming that the main finding---near-identical ATM hedging performance regardless of pool depth---is not an artifact of the sample restriction.

\begin{table}[H]
\centering
\caption{Characteristics of the 16 subnets excluded by the MAE $> 50$\% filter. All exhibit extremely high hedging errors indicative of degenerate price paths. The excluded subnets are systematically shallower than the 82 included subnets (median $\log_{10}(k) = 9.20$ vs.\ $9.93$).}\label{tab:dropped_subnets}
\begin{tabular}{rrrrr}
\toprule
Subnet & $\log_{10}(k)$ & CEV MAE (\%) & BS MAE (\%) & CEV/BS Ratio \\
\midrule
86  & 8.89 & 3{,}757  & 3{,}684  & 1.020 \\
126 & 8.98 & 7{,}456  & 7{,}145  & 1.044 \\
109 & 8.99 & 2{,}336  & 2{,}299  & 1.016 \\
108 & 9.01 & 1{,}019  & 993      & 1.026 \\
94  & 9.06 & 2{,}214  & 2{,}174  & 1.018 \\
49  & 9.07 & 3{,}957  & 3{,}796  & 1.042 \\
113 & 9.09 & 1{,}779  & 1{,}713  & 1.038 \\
99  & 9.19 & 2{,}342  & 2{,}321  & 1.009 \\
87  & 9.21 & 20{,}072 & 19{,}490 & 1.030 \\
114 & 9.23 & 2{,}081  & 2{,}050  & 1.015 \\
107 & 9.31 & 2{,}647  & 2{,}602  & 1.017 \\
80  & 9.37 & 5{,}620  & 5{,}439  & 1.033 \\
47  & 9.72 & 4{,}213  & 4{,}111  & 1.025 \\
38  & 9.75 & 4{,}363  & 4{,}299  & 1.015 \\
76  & 9.76 & 433{,}198 & 433{,}194 & 1.000 \\
15  & 9.82 & 4{,}228  & 4{,}193  & 1.008 \\
\bottomrule
\end{tabular}
\end{table}

\end{document}